\normalfont\fontsize{11}{15}\bfseries}{\thesection}{1em}{}
\normalfont\fontsize{11}{15}\bfseries}{\thesubsection}{1em}{}
\def\ind{\perp\!\!\!\perp}
\newcommand{\Pb}{\mathbb{P}}
\newcommand{\Pn}{\mathbb{P}_n}
\newcommand{\E}{\mathbb{E}}
\newcommand{\PP}{{\mathbb{P}}}
\DeclareSymbolFont{bbold}{U}{bbold}{m}{n}
\DeclareSymbolFontAlphabet{\mathbbold}{bbold}
\newtheorem{theorem}{Theorem}
\newtheorem{lemma}{Lemma}
\newtheorem{proposition}{Proposition}
\newtheorem{assumption}{Assumption}
\newtheorem{remark}{Remark}
\theoremstyle{definition}
\theoremstyle{remark}
\def\spacingset#1{\renewcommand{\baselinestretch}%
{#1}\small\normalsize} \spacingset{1}
\begin{document}

\pagestyle{plain}

\newcommand{\blind}{0}

\newcommand{\tit}{Local Effects of Continuous Instruments without Positivity}

\if0\blind

{\title{\tit\thanks{Research in this article was supported by the Patient-Centered Outcomes Research Institute (PCORI Awards ME-2021C1-22355) and by the National Library of Medicine, \#1R01LM013361-01A1. All statements in this report, including its findings and conclusions, are solely those of the authors and do not necessarily represent the views of PCORI or its Methodology Committee.  The dataset used for this study was purchased with a grant from the Society of American Gastrointestinal and Endoscopic Surgeons. Although the AMA Physician Masterfile data is the source of the raw physician data, the tables and tabulations were prepared by the authors and do not reflect the work of the AMA. The Pennsylvania Health Cost Containment Council (PHC4) is an independent state agency responsible for addressing the problems of escalating health costs, ensuring the quality of health care, and increasing access to health care for all citizens. While PHC4 has provided data for this study, PHC4 specifically disclaims responsibility for any analyses, interpretations or conclusions. Some of the data used to produce this publication was purchased from or provided by the New York State Department of Health (NYSDOH) Statewide Planning and Research Cooperative System (SPARCS). However, the conclusions derived, and views expressed herein are those of the author(s) and do not reflect the conclusions or views of NYSDOH. NYSDOH, its employees, officers, and agents make no representation, warranty or guarantee as to the accuracy, completeness, currency, or suitability of the information provided here. This publication was derived, in part, from a limited data set supplied by the Florida Agency for Health Care Administration (AHCA) which specifically disclaims responsibility for any analysis, interpretations, or conclusions that may be created as a result of the limited data set.}}
\author{Prabrisha Rakshit\thanks{University of Pennsylvania, United States, Email: prabrisha.rakshit@pennmedicine.upenn.edu}
\and Alexander W. Levis\thanks{Carnegie Mellon University, United States, Email: alevis@andrew.cmu.edu}
\and Luke Keele\thanks{Associate Professor, University of Pennsylvania, United States, Email: luke.keele@gmail.com, corresponding author}
}

\date{\today}

\maketitle
}\fi

\if1\blind
\title{\bf \tit}
\maketitle
\fi

\begin{abstract}
Instrumental variables are a popular study design for the estimation of treatment effects in the presence of unobserved confounders. In the canonical instrumental variables design, the instrument is a binary variable. In many settings, however, the instrument is continuous. Standard estimation methods can be applied with continuous instruments, but they require strong assumptions. While recent work has introduced more flexible estimation approaches, these methods require a positivity assumption that is implausible in many applications. We derive a novel family of causal estimands using stochastic dynamic interventions that allows a range of intervention distributions that are continuous with respect to the observed distribution of the instrument. These estimands focus on a specific local effect but do not require a positivity assumption. Next, we develop doubly robust estimators for these estimands that allow for estimation of the nuisance functions via nonparametric estimators. We use empirical process theory and sample splitting to derive asymptotic properties of the proposed estimators under weak conditions. In addition, we derive methods for profiling the principal strata as well as a method of sensitivity analysis. We evaluate our methods via simulation and demonstrate their feasibility using an application on the effectiveness of surgery for specific emergency conditions.
\end{abstract}

\noindent%
{\it Keywords: causal inference; instrumental variables; stochastic intervention; semiparametric theory; doubly robust methods} 

\thispagestyle{empty}

\clearpage

\spacingset{2}

\section{Introduction}

Bias from unmeasured confounding is a primary concern in studies designed to estimate causal effects. Randomization of treatment assignment is perhaps the most well-known strategy to handle unmeasured confounding. Another strategy is the use of an instrumental variable. An instrumental variable (IV), or instrument, is a measured factor that is as good as randomized, is associated with the treatment of interest, but affects outcomes only through its impact on treatment assignment \citep{Angrist:1996}. More concretely for a variable to be a valid instrument it must meet three conditions: (1) it must be associated with the treatment; (2) it must be unconfounded; and (3) it cannot have a direct effect on the outcome \citep{Angrist:1996}. Under these conditions, an IV can provide a consistent estimate of a causal effect even in the presence of unobserved confounding between the treatment and the outcome. See \citet{Baiocchi:2014} and \citet{imbens2014instrumental} for detailed reviews of the IV design.

In the canonical IV design, the instrument is a binary variable. However, in many applications the IV is a multi-valued (discrete or continuous) measure. For example, in comparative effectiveness research, analysts often use a physician's preference for a treatment as an IV for a patient's actual exposure to a treatment \citep{brookhart2006evaluating,keeleegsiv2018}. In applications of this type, the IV may be measured as the percentage of times a surgeon opted to operate for a pre-defined set of medical conditions. This IV, referred to as tendency to operate (TTO), is clearly multi-valued unlike in the canonical IV design. Classical IV estimation methods such as two-stage least squares can be employed with multi-valued IVs such as TTO, but impose a number of restrictions including parametric assumptions for identification, constant treatment effects, and correctly specified outcome models \citep{okui2012doubly}.

More recently, a number of authors have developed estimation methods for the IV design that allow for continuous instruments, incorporate doubly robust adjustment, and allow for heterogenous responses to treatment~\citep{robins2001comment,kennedy2019robust,tan2010marginal,okui2012doubly,vanderlaan2003unified}. Some of this research has targeted the average treatment effect, with identification depending on structural models or other restrictions on treatment effect heterogeneity~\citep{tan2010marginal}. The difficulty with this approach is that it requires some parametric structure for the underlying causal model, for which \emph{a priori} information is generally absent. In practice, misspecification of such models may incur large biases. Alternatively, analysts have instead invoked what is known as the monotonicity assumption \citep{imbens1994}. Under monotonicity, we rule out the possibility that any units respond in opposition to the encouragement provided by the instrument. The advantage of monotonicity is that it is plausible in many applications and allows for nonparametric identification of causal effects among the sub-population of units that respond to the encouragement from the instrument---often referred to as the compliers. Generally, the monotonicity assumption is articulated for binary instruments \citep{imbens1994,tan2006,ogburn2015doubly,takatsu2023doubly}. However, research focused on the estimation of local IV (LIV) curves articulated a version of the monotonicity assumption for the units who comply with instrument encouragement at a given threshold value \citep{heckman1997instrumental,heckman1999local,heckman2005structural,glickman2000derivation,vytlacil2002independence,kennedy2019robust,mauro2020instrumental}. The LIV approach uses a latent index or selection framework to achieve nonparametric identification with a continuous instrument. Early LIV estimators relied on restrictive parametric models and targeted estimands that were fully conditional on all measured covariates
\citep{basu2007use,carneiro2011estimating}. More recent work has focused on developing doubly-robust machine learning based estimation methods for LIV curves \citep{kennedy2019robust,mauro2020instrumental}.  These estimation methods use a semiparametric doubly robust approach that allows for flexible estimation of nuisance functions while possibly attaining parametric rates of convergence for the target estimand. 


While this recent research has made considerable progress toward flexible estimation methods for LIV curves, there remains one key challenge. Specifically, these methods invoke a positivity assumption. In settings with binary instruments, positivity requires that the probability of assignment to the instrument is bounded away from zero and one for all subjects. With multi-valued IVs, positivity requires that no subset of possible IV values are ruled out based on a subject's measured covariates, i.e., the conditional density/mass function of the IV is positive everywhere, which is often unrealistic. For example with TTO, positivity will be violated if each patient in the study population does not have some probability of receiving care from physicians with a range of preferences. That is, if a specific hospital is only served by surgeons with low TTO values, positivity would be violated for patients at that hospital since there are no high TTO surgeons to which they can be assigned. More specifically, positivity requires that for every patient there must be some probability of being able to receive care from every physician within the support of the TTO distribution. Another common continuous IV is distance to a medical facility or college. Positivity is also an unrealistic assumption for this kind of IV, as it would require that for every unit there must be some probability of being every distance within the support of the IV---the maximum distance a unit will travel will depend heavily on where they live. 

To develop LIV methods that do not invoke a positivity assumption, we apply the stochastic dynamic intervention (SDI) framework to the IV design \citep{kennedy2019nonparametric,mauro2020instrumental}. Specifically, we describe a new class of causal effects for continuous IV that can be estimated even if positivity fails. That is, instead of requiring that it be feasible to assign all IV values to every unit, we will consider a range of intervention distributions absolutely continuous with respect to the observed distribution of the IV. Causal effects under these interventions are highly relevant, and can include varying strengths of nudges (upward or downward) to the observed IV value, without considering any values that would be impossible in practice. In addition, we derive robust and efficient estimators, and use empirical process theory and sample splitting to derive their asymptotic properties under weak conditions. Our estimation methods allow us to use highly flexible, nonparametric estimators for the nuisance functions without first order bias. We also include a doubly-robust method of cross validation for selection of hyperparameters, which is critical to avoid overfitting of the LIV curve. Next, we develop methods for profiling of principal strata and sensitivity analysis with respect to the monotonicity assumption that are consistent with the SDI framework. We evaluate our methods via simulation and demonstrate their feasibility using TTO as an IV for receipt of surgery for patients with cholecystitis.


\section{Preliminaries}

\subsection{Notation}

Consider the standard instrumental variable setup, in which the observed data are $n$ iid copies of
$O = (\boldsymbol{X},Z, A, Y) \sim \Pb$, where $\boldsymbol{X} \in \mathcal{X} \subseteq \mathbb{R}^d$ is a vector of covariates, $Z \in \mathbb{R}$ is a continuous instrument, $A \in \{0,1\}$ is binary exposure variable, and $Y \in \mathbb{R}$ is a real-valued outcome of interest. We let $A^z$ and $Y^z$ denote the counterfactual exposure and outcome values, had the instrument been set to $Z = z$. Similarly, we also define $Y^a$ and $Y^{z,a}$ to be the potential outcomes under an intervention that sets $A = a$, and an intervention that sets both $Z = z$ and $A = a$, respectively. We assume that we observe an independent and identically distributed sample $(O_1, ..., O_n)$ from an unknown data-generating distribution $P_0$. We denote by $\mathbb{P}_n$ its corresponding empirical distribution. We index objects by $P$ when they depend on a generic distribution, and we use the subscript $0$ as short-hand for $P_0$. For example, we denote the expectation under $P_0$ by $\E_0$. For a distribution $P$ and any $P$-integrable function $\eta$, we define $P\eta = \int \eta \, dP$. For instance, $\mathbb{P}_n \eta$ represents $n^{-1}\sum_{i=1}^n \eta(O_i)$. Occasionally, we slightly abuse the notation and treat the random variable as an identity mapping to itself, such that, $\mathbb{P}_n Y = n^{-1}\sum_{i=1}^n Y_i$.  Next, we define three nuisance functions: the outcome model $\mu_{P}(x,z) := \mathbb{E}_P(Y \mid X=x, Z = z)$, the model for treatment $,\lambda_{P}(x,z) := \mathbb{E}_P(A \mid X=x, Z = z)$, and the instrument propensity score $\pi_{P}(x,z) := P(Z = z \mid X=x)$. 

\subsection{Assumptions}

Next, we define the key set of assumptions that we use to identify the causal effect of interest.  First, we outline three core IV assumptions that are common to all IV designs.

\label{sec:assump}
\begin{assumption}[Consistency] \label{ass:consistency}
    $A^Z = A$, and $Y^{Z, A} = Y^A = Y$, almost surely.
\end{assumption}

\begin{assumption}[Unconfoundedness] \label{ass:UC}
    $Z \ind (A^z, Y^z) \mid \boldsymbol{X}$.
\end{assumption}

\begin{assumption}[Exclusion Restriction] \label{ass:ER}
    $Y^{z,a} \equiv Y^a$, with probability 1.
\end{assumption}

In words, Assumption~\ref{ass:consistency} asserts that interventions on $Z$ and $A$ are well-defined, and that there is no interference between subjects. Assumption~\ref{ass:UC} asserts that the effect of $Z$ on $A$ and $Y$ is unconfounded, given measured covariates $\boldsymbol{X}$. Assumption~\ref{ass:ER} asserts that the effect of $Z$ on $Y$ acts entirely through its effect on $A$, i.e., $Z$ has no direct effect on $Y$. 

We also invoke a monotonicity assumption that has appeared in the LIV literature (see e.g., \citet{kennedy2019robust}), relevant specifically for continuous instruments.
\begin{assumption}[Monotonicity] \label{ass:mono}
   If $z' > z$ then $A^{z'} \geq A^z$ with probability 1.
\end{assumption}
\noindent We interpret this monotonicity assumption as stipulating that higher values of the instrument can either encourage otherwise unexposed units to be exposed to treatment or have no effect at all. Higher instrument values, thus, cannot discourage treatment exposure relative to lower values. Under this monotonicity assumption, we can still classify units into subpopulations of never-takers, always-takers, and compliers, but there are many possible classifications that are possible. For instance, we can define a complier at $Z = z$ relative to a shift $\delta$ to be a unit for which $A(z) = 1$ but $A(z - \delta) = 0$ for some $\delta > 0$. This version of monotonicity is also employed in \citet{glickman2000derivation} and \citet{vytlacil2002independence}. In these works, it was shown that this continuous version of the monotonicity assumption can be equivalently expressed as the following latent threshold model.
\begin{assumption}[Latent Threshold] \label{ass:thresh}
  $A^z = 1(z \geq T)$ for an unobserved random threshold $T$. 
\end{assumption}
\noindent Here, each complier has some instrument value---the threshold $T$---above which they are exposed to the treatment. For lower values of the instrument, that unit is exposed to control. Large values of $T$ thus imply that it requires higher values of the IV to encourage treatment exposure, and that units are inherently less willing to take the treatment. It is clear that Assumption~\ref{ass:thresh} implies Assumption~\ref{ass:mono}, but to see the reverse implication, observe that Assumption~\ref{ass:mono} yields the following definition of $T$:
\[T = \inf\{z \in \mathbb{R} : A^z = 1\},\]
where $\inf{\emptyset} \equiv \infty$. In words, $T = \infty$ when $A^z = 0$ for all $z$ (``never takers''), $T = -\infty$ when $A^z = 1$ for all $z$ (``always takers''), and otherwise $T \in \mathbb{R}$ represents the smallest value such that $A^z = 1$, i.e., $A^T = 1$ but $A^{T - \delta} = 0$ for all $\delta > 0$.




\section{Formulation of Causal Effects}
A frequently targeted causal estimand in IV analyses is the so-called local average treatment effect (LATE) parameter. The usual LATE for a binary IV, $Z$, measures the effect of treatment $A$ on individuals who receive treatment if assigned $Z = 1$ but not if $Z = 0$. In cases where $Z$ is not binary, researchers may aim to estimate a variation of the LATE, where compliers correspond to those who undergo treatment at a particular $Z=z$ but not at another 
$Z=z^{\prime}$.
\begin{equation}
    \psi_{LATE} = \E\left(Y^1 - Y^0 \mid A^z > A^{z^{\prime}}\right)
\end{equation}
This parameter requires stronger identifying assumptions than the usual LATE in the case of continuous IVs. An alternative strategy for handling non-binary $Z$ involves adopting the classical IV approach, often employing two-stage least squares or a similar method, which estimates the LATE within a regression framework. While this parametric approach is widely used, it heavily relies on the assumption that the researcher can explicitly specify the relationships among potentially high-dimensional covariates, the instrument, treatment, and outcome variables. Next, we develop a dynamic intervention approach (coupled with influence-function-based estimation) that allows us to circumvent these strong parametric assumptions and still accurately estimate relevant causal effects. 

\subsection{Intervention distributions}
To define causal effects, we need to specify one or a set of interventions of interest. In this context, we leverage ideas from the literature on stochastic interventions \citep{kennedy2019nonparametric, diaz2020causal} and modified treatment policies \citep{diaz2012population, haneuse2013, young2014}. Specifically, we consider interventions that modify the distribution of the instrument, wherein we substitute the observational IV density (or ``propensity score''), $\pi(z \mid X) \coloneqq p(z \mid X)$, with a tilted distribution described below. These interventions are dynamic, meaning they depend on the patient's covariate history, since they are tailored to individual patient characteristics via the IV propensity score.

Concretely, we consider interventions under which the observed density for the propensity score, $\pi$, is replaced by an alternative density $q$. In extant literature, such interventions are described purely stochastically. Here, we introduce a novel modified treatment policy formulation to be described in Section~\ref{sec:proposed}. The proposed formulation is very general, and identification will be possible in the absence of a positivity assumption, so long as $q \ll \pi$, i.e., the intervention density $q$ is absolutely continuous with respect to $\pi$. With that said, we focus our analysis on a class of exponentially tilted densities $q_{\delta}$ \citep{diaz2020causal, schindl2024}, indexed by $\delta \in \mathbb{R}$, whose cumulative distribution functions are given by
\begin{equation}
    Q_{\delta}(z \mid X) = \frac{\int_{-\infty}^{z}e^{\delta u}\pi(u \mid X)du}{\int_{-\infty}^{\infty}e^{\delta u}\pi(u \mid X)du}.
    \label{eq: tilted intervention}
\end{equation}
The tilt parameter $\delta$, defined by the user, controls the degree to which the IV propensity scores $\pi(z \mid X)$ deviate from their intervened values $q_{\delta}(z \mid X)$. As such, investigators can use different values of $\delta$ to define different counterfactual scenarios. For example, suppose $Z \in [0,1]$ is a bounded instrument, such that $\pi(z \mid X) > 0$ for all $z \in [0,1]$, almost surely. Then for $\delta = -\infty$, the intervention distribution would assign all units the instrument value $Z = 0$, irrespective of their $X$ values. Similarly, if $\delta = \infty$, it would always assign the value $Z = 1$. Selecting intermediate positive (negative) values of $\delta$ define scenarios where the distribution of $Z$ is shifted up (down) under the intervention. 

\subsection{Proposed estimands}\label{sec:proposed}
Let $A^{Q_{\delta}}$ be the potential treatment had the instrument been randomly sampled from $Q_{\delta}$, given covariates $X$. For $\delta > 0$, we would like to define the causal quantity of interest to be the average treatment effect for those units who would take treatment if the distribution of the instrument is $Q_{\delta}$ but not under the observed distribution (e.g., among those with $A^{Q_{\delta}} > A$). However, such a notion of ``compliers'' is difficult to interpret, in that even though $Q_{\delta}$ may be stochastically greater than $\Pi$, whether or not $A^{Q_{\delta}} > A$ will depend on the particular stochastic value that the instrument happens to take under the intervention.

To remedy this issue, we introduce more a deterministic notion of the modified instrument value: for any intervention cumulative distribution function (CDF) $Q(\, \cdot \mid X)$, define
\begin{equation} \label{eq:mod-inst}
Z_{Q} \coloneqq Q^{-1}\left(\Pi(Z\mid X) \mid X\right),
\end{equation}
where $\Pi(z \mid X) \equiv P[Z \leq z \mid X]$ is the conditional CDF of the observed instrument $Z$, and $Q^{-1}$ denotes the usual quantile function corresponding to $Q$. In words, $Z_Q$ is taken to be a function of $Z$ itself, such that it follows the target intervention distribution. We note that a variant on this transformation also appeared in \citet{diaz2013assessing}. As the next result shows, the transformed instrument $Z_Q$ is rank-preserving, such that compliers---with respect to the counterfactual treatments $A^{Z_{Q}}$---are well-defined.

\begin{proposition}\label{prop:tilt-LATE}
    For any target distribution $Q$, $Z_{Q}$ defined in~\eqref{eq:mod-inst} has distribution $Q(\, \cdot \mid X)$ given $X$. Moreover, under Assumption~\ref{ass:mono}, if $Q$ stochastically dominates $\Pi$ (i.e., $\Pi(z\mid X) \geq Q(z \mid X)$ for all $z \in \mathbb{R}$), then $P[A^{Z_{Q}} \geq A] = 1$. Similarly, if $Q$ is stochastically dominated by $\Pi$, then $P[A^{Z_{Q}} \leq A] = 1$.
\end{proposition}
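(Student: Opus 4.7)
The plan is to handle the three claims in sequence, with the first two being straightforward applications of the probability integral transform and its inverse, and the third following by a symmetric argument.

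For the first claim, I would start by noting that because $Z$ is a continuous instrument, the random variable $U := \Pi(Z \mid X)$ is conditionally uniform on $(0,1)$ given $X$ (this is the standard probability integral transform). Then by definition $Z_Q = Q^{-1}(U \mid X)$, and the inverse-CDF transform gives that $Z_Q$ has conditional distribution $Q(\, \cdot \mid X)$ given $X$. This is a one-line invocation of two classical facts, and it justifies calling $Z_Q$ a ``rank-preserving'' modification of $Z$.

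For the second claim, I would first translate the stochastic dominance assumption on CDFs into a pointwise inequality on quantile functions: if $\Pi(z \mid X) \geq Q(z \mid X)$ for every $z$, then for each $u \in (0,1)$ one has $\Pi^{-1}(u \mid X) \leq Q^{-1}(u \mid X)$ (this is a standard duality between CDF and quantile, provable by a direct argument using the definitions of $\inf\{z : F(z) \geq u\}$). Applying this with $u = \Pi(Z \mid X)$ and using that $\Pi^{-1}(\Pi(Z \mid X) \mid X) = Z$ almost surely in the continuous case yields
\[
Z_Q = Q^{-1}(\Pi(Z \mid X) \mid X) \geq \Pi^{-1}(\Pi(Z \mid X) \mid X) = Z
\]
with probability one. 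Now invoke Assumption~\ref{ass:mono} (monotonicity), which says $z' \geq z$ implies $A^{z'} \geq A^z$ almost surely, and combine with consistency (Assumption~\ref{ass:consistency}) to conclude $A^{Z_Q} \geq A^Z = A$ almost surely.

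The third claim follows by the mirror-image argument: stochastic domination by $\Pi$ reverses the CDF inequality, hence reverses the quantile inequality, hence gives $Z_Q \leq Z$ a.s., and monotonicity together with consistency yield $A^{Z_Q} \leq A$. The only delicate points in the whole argument are (i) making sure the CDF-to-quantile duality is stated correctly when there are flat regions of $Q$ or $\Pi$, which is handled by the left-continuous definition of $Q^{-1}$ and will likely be the main nuisance to get right, and (ii) ensuring that monotonicity, which is formulated for the random counterfactual process $\{A^z\}_{z \in \mathbb{R}}$, can be legitimately evaluated at the (covariate-measurable) random point $Z_Q$; since $Z_Q$ is a measurable transformation of $(Z, X)$ and the monotonicity statement holds jointly in $z, z'$ almost surely, this measurability issue is standard but worth flagging explicitly.
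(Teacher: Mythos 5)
Your proof is correct and takes essentially the same approach as the paper: both arguments reduce the claim to showing $Z_Q \geq Z$ almost surely via the CDF--quantile duality and then conclude by monotonicity and consistency, with the first claim handled by the probability integral transform. The paper compresses the key step into the single equivalence $Z_Q \geq Z \iff \Pi(Z \mid X) \geq Q(Z \mid X)$, whereas you route through the pointwise quantile inequality $\Pi^{-1}(\cdot \mid X) \leq Q^{-1}(\cdot \mid X)$ together with $\Pi^{-1}(\Pi(Z \mid X) \mid X) = Z$ a.s.; this is the same underlying duality, and your explicit flagging of flat regions and of evaluating the counterfactual process at the random point $Z_Q$ is simply a more careful rendering of the paper's one-line argument.
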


Focusing on the tilted distributions in~\eqref{eq: tilted intervention}, we define $Z_\delta \equiv Z_{Q_{\delta}}$, for some arbitrary value $\delta \in \mathbb{R}$. The next result shows that $Q_{\delta}$ satisfies the desired stochastic dominance property.

\begin{proposition}\label{prop:dominance}
    For $\delta > 0$, $Q_\delta$ stochastically dominates $\Pi$; for $\delta < 0$, $Q_{\delta}$ is stochastically dominated by $\Pi$.
\end{proposition}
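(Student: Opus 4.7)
The plan is to exploit the exponential-tilt structure by recognizing that $dQ_\delta(\,\cdot\mid X)/d\Pi(\,\cdot\mid X) = e^{\delta z}/C_\delta(X)$, with $C_\delta(X) \coloneqq \int e^{\delta u}\pi(u\mid X)\,du$, is monotone in $z$ (strictly increasing when $\delta > 0$ and strictly decreasing when $\delta < 0$). This is precisely the monotone-likelihood-ratio property, which is well known to imply first-order stochastic dominance. I will translate that classical implication into the notation of the proposition via a short one-variable calculus argument on the function $G(z) \coloneqq Q_\delta(z\mid X) - \Pi(z\mid X)$.

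Fix $X$ and $\delta > 0$, and consider $G$ above. Its derivative in $z$ is
\[
G'(z) = q_\delta(z\mid X) - \pi(z\mid X) = \pi(z\mid X)\left(\frac{e^{\delta z}}{C_\delta(X)} - 1\right).
\]
Because $z \mapsto e^{\delta z}/C_\delta(X)$ is strictly increasing and has mean one under $\pi(\,\cdot\mid X)$ (by the very definition of $C_\delta(X)$), it crosses the value $1$ at exactly one point $z^{*}(X)$. Hence $G'$ is nonpositive on $(-\infty, z^{*})$ and nonnegative on $(z^{*},\infty)$, so $G$ first decreases and then increases. Combined with the boundary values $G(-\infty) = 0$ and $G(\infty) = 1 - 1 = 0$, this forces $G(z) \leq 0$ for all $z \in \mathbb{R}$, which is exactly the defining inequality $Q_\delta(z\mid X) \leq \Pi(z\mid X)$ for stochastic dominance of $Q_\delta$ over $\Pi$. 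The case $\delta < 0$ follows by the same argument with all signs reversed: the likelihood ratio is strictly decreasing, $G'$ switches from positive to negative, and $G \geq 0$ throughout.

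There is no real obstacle; the proof is essentially immediate once the MLR structure is observed. The only items that deserve a sentence of care are (i) finiteness of $C_\delta(X)$ almost surely, so that $q_\delta$ is a bona fide conditional density (this is implicit in the paper's requirement $Q_\delta \ll \Pi$), and (ii) the fact that the argument is carried out pointwise in $X$, so the dominance assertion is a conditional statement that holds with $P$-probability one. An alternative would be to invoke the Karlin--Rubin MLR-implies-stochastic-order lemma as a black box and skip the calculus argument, but the self-contained version above is short enough to include directly.
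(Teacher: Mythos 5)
Your proof is correct, but it takes a genuinely different route from the paper. The paper writes the gap directly as a conditional covariance,
\[
\Pi(z\mid X)-Q_\delta(z\mid X)=\E\!\left[\mathds{1}(Z\le z)\left\{1-\frac{e^{\delta Z}}{\E(e^{\delta Z}\mid X)}\right\}\,\Big|\,X\right],
\]
and then invokes the association (Chebyshev/Harris) inequality $\E\{f(Z)g(Z)\}\ge \E\{f(Z)\}\E\{g(Z)\}$ for monotone $f,g$ (applied to $f(z)=-\mathds{1}(z\le u)$ and $g(z)=e^{\delta z}$) to conclude the right-hand side is nonnegative. You instead observe that $dQ_\delta/d\Pi\propto e^{\delta z}$ is a monotone likelihood ratio and run the classical single-crossing argument on $G(z)=Q_\delta(z\mid X)-\Pi(z\mid X)$: the density ratio crosses $1$ exactly once, so $G$ decreases then increases between the boundary values $G(\pm\infty)=0$, forcing $G\le 0$. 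Both arguments are sound and about equally short. The paper's covariance route has the mild advantage of working verbatim for arbitrary conditional distributions of $Z$ (no density or differentiability needed --- the expectation identity holds for discrete or mixed laws), whereas your calculus argument leans on $\pi(\cdot\mid X)$ being a genuine density so that $G$ is absolutely continuous with derivative $q_\delta-\pi$; within this paper that is assumed anyway. Your MLR observation in fact delivers something slightly stronger than first-order dominance (likelihood-ratio ordering), which is more than the proposition asks for but costs nothing. Your two caveats --- finiteness of $C_\delta(X)$ so that $G(\infty)=0$, and the pointwise-in-$X$ nature of the claim --- are exactly the right ones and are also implicit in the paper's version.
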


\noindent See the Appendix~\ref{sec:proofs} for proofs of both propositions. With Propositions~\ref{prop:tilt-LATE} and \ref{prop:dominance} in hand, we can define the target estimands of interest. Namely, for $\delta > 0$, we define the $\delta$-tilted LATE to be
\[
\psi(\delta) \coloneqq \mathbb{E}(Y^1 - Y^0 \mid A^{Z_\delta} > A),
\]
and similarly for $\delta < 0$, the $\delta$-tilted LATE is defined to be $\mathbb{E}(Y^1 - Y^0 \mid A^{Z_\delta} < A)$. The result of Proposition~\ref{prop:dominance} not only aids in technically establishing the monotonicity property for defining and identifying (see Section~\ref{sec:main}) the $\delta$-tilted LATE, but also assists in interpretation of this estimand. Namely, for $\delta > 0$, $\psi(\delta)$ represents the average treatment effect for individuals who would opt for treatment if the observed $Z$ were replaced by $Z_\delta$ (whose distribution is $Q_{\delta}$), but are not treated under $Z$. The fact that $Q_{\delta}(\, \cdot\mid X)$ stochastically dominates $\Pi(\, \cdot \mid X)$ implies that the former distribution has a higher probability of leading individuals to choose treatment for a given set of observed characteristics $X$. Consequently, here the complier population can be defined as those who receive treatment when the instrumental variable is `stochastically' elevated by an amount defined by $\delta$ (i.e., more likely to lead to treatment) but do not receive treatment at the instrument's natural level.

\begin{remark}
    In the main paper, we analyze the effect $\psi(\delta) := \E(Y^1 - Y^0 \mid A^{Z_{\delta}} > A)$ for $\delta > 0$. These results can be generalized for any arbitrary $\delta_1 > \delta_2$ with the parameter $\psi(\delta_1, \delta_2) := {\color{black}\E(Y^1 - Y^0 \mid A^{Z_{\delta_1}} > A^{Z_{\delta_2}})}$---by construction, $\psi(\delta) \equiv \psi(\delta, 0)$ since $Z \equiv Z_0$. Refer to the Appendix \ref{app:d1d2} for more precise details on defining and identifying $\psi(\delta_1, \delta_2)$. All other results discussed for $\psi(\delta)$ can be generalized to $\psi(\delta_1, \delta_2)$.
    The discussion is also applicable for $\delta < 0$, e.g., one can consider $\psi(0, \delta) = \E(Y^1 - Y^0 \mid A > A^{Z_{\delta}})$ when $\delta < 0$.
\end{remark}

\section{Main Results}
\label{sec:main}

\subsection{Identification}

Next, we consider identification of the $\delta$-tilted LATE. First, we provide a brief review of identification of the standard LATE estimand. The usual LATE estimand with binary $Z$ is defined and identified as:
\begin{equation}
    \psi_{LATE} = \E\left(Y^1 - Y^0 \mid A^1 > A^0\right) = \frac{\E\left(\mu(1,X) - \mu(0,X)\right)}{\E\left(\lambda(1,X) - \lambda(0,X)\right)}.
\end{equation}
As we noted earlier, identification of this LATE estimand requires a positivity assumption. The positivity assumption asserts that each subject has a positive probability of receiving all levels of $z$, given their covariates. In the case of binary $Z$, this implies that both $Z = 0$ and $Z=1$ must have a positive probability. However, when $Z$ is continuous or multi-valued rather than binary, this assumption suggests that each subject must have a positive conditional probability (or density) of receiving each value of $z \in \text{Supp}(Z)$, given their covariates, which is generally implausible. 

Fortunately, we can identify the $\delta$-tilted LATE without a positivity assumption. As usual for local effects based on instrumental variables, though, we do require a relevance assumption.

\begin{assumption}[Relevance] \label{ass:relevance}
    $\iint \mathbb{E}(A \mid Z=u, X=x) \, dQ_\delta(u \mid X=x) \, d\mathbb{P}(x) \neq \mathbb{E}(A)$.
\end{assumption}

In words, Assumption~\ref{ass:relevance} requires that the expected number of treated individuals under the proposed intervention on $Z$ differs from the expected number under no intervention, i.e., that the intervention on $Z$ has \textit{some} marginal effect on $A$. With this assumption in hand, we obtain the following identification result.
\begin{proposition}
    Under Assumptions~\ref{ass:consistency}--\ref{ass:mono}, and \ref{ass:relevance}, the $\delta$-tilted LATE is identified as 
\begin{equation}
    \psi(\delta) = \frac{\iint \mathbb{E}(Y \mid Z=u, X=x) \, dQ_\delta(u \mid X=x) \,d\mathbb{P}(x)-\mathbb{E}(Y)}{\iint \mathbb{E}(A \mid Z=u, X=x) \, dQ_\delta(u \mid X=x) \, d\mathbb{P}(x)-\mathbb{E}(A)}.
\label{eq: tilted LATE}
\end{equation}
\end{proposition}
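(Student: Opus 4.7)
The plan is to identify the numerator and denominator of $\psi(\delta)$ separately, and then combine them. By Proposition~\ref{prop:tilt-LATE}, $Z_\delta$ is a deterministic function of $(Z, X)$ whose conditional distribution given $X$ equals $Q_\delta(\cdot\mid X)$, and under Assumption~\ref{ass:mono} with $\delta > 0$ we have $P[A^{Z_\delta} \geq A] = 1$. I would first exploit the binary-treatment/monotonicity structure: since $A, A^{Z_\delta} \in \{0,1\}$ and $A^{Z_\delta} \geq A$ almost surely, the only nonzero case is the complier event $\{A^{Z_\delta}=1, A=0\}$, so a case check gives the pointwise identity $Y^{A^{Z_\delta}} - Y^A = (Y^1 - Y^0)(A^{Z_\delta} - A)$ and likewise $A^{Z_\delta} - A = \mathbf{1}\{A^{Z_\delta} > A\}$. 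Taking expectations yields $\mathbb{E}(Y^{A^{Z_\delta}} - Y^A) = \psi(\delta)\, P(A^{Z_\delta} > A) = \psi(\delta)\, \mathbb{E}(A^{Z_\delta} - A)$. Applying Assumption~\ref{ass:consistency} ($Y^A = Y$, $A^Z = A$) and Assumption~\ref{ass:ER} ($Y^{A^{Z_\delta}} = Y^{Z_\delta, A^{Z_\delta}} = Y^{Z_\delta}$) reduces the problem to identifying the stochastic-intervention means $\mathbb{E}(Y^{Z_\delta})$ and $\mathbb{E}(A^{Z_\delta})$ from observed data.

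Next I would write $Z_\delta = g_\delta(Z, X)$ with $g_\delta(z, x) = Q_\delta^{-1}(\Pi(z\mid x)\mid x)$, and compute
\[
\mathbb{E}(Y^{Z_\delta}\mid X) = \int \mathbb{E}(Y^{g_\delta(z,X)}\mid Z=z, X)\, d\Pi(z\mid X) = \int \mathbb{E}(Y^{g_\delta(z,X)}\mid X)\, d\Pi(z\mid X),
\]
where the second equality uses the unconfoundedness Assumption~\ref{ass:UC}. A change of variables $u = g_\delta(z, X)$ then turns the integral against $\Pi(\cdot\mid X)$ into one against $Q_\delta(\cdot\mid X)$ by Proposition~\ref{prop:tilt-LATE}, giving $\mathbb{E}(Y^{Z_\delta}\mid X) = \int \mathbb{E}(Y^u\mid X)\, dQ_\delta(u\mid X)$. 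Finally, I would invoke the standard consistency/unconfoundedness identity $\mathbb{E}(Y^u\mid X) = \mathbb{E}(Y\mid Z=u, X)$, averaging over $X$ to recover the numerator in~\eqref{eq: tilted LATE}; the identical argument applied with $A$ in place of $Y$ delivers the denominator, which is nonzero by Assumption~\ref{ass:relevance}.

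The delicate step, and the main obstacle, is the final pointwise identification $\mathbb{E}(Y^u\mid X) = \mathbb{E}(Y\mid Z=u, X)$: in its usual form this requires $\pi(u\mid X) > 0$ throughout the support of $Z\mid X$, which is precisely the positivity assumption we wish to avoid. The resolution is that after the change of variables we only need this identity to hold for $Q_\delta(\cdot\mid X)$-almost every $u$, and the hypothesis $Q_\delta \ll \Pi$ guarantees exactly that: $q_\delta(u\mid X) > 0$ implies $\pi(u\mid X) > 0$. Thus the substitution is valid $Q_\delta$-almost surely, and the identification formula follows without global positivity---this is the whole point of restricting attention to interventions absolutely continuous with respect to the observed IV density.
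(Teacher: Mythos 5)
Your proposal is correct and follows essentially the same route as the paper's own proof: the same decomposition of the conditional expectation into a ratio, the same use of monotonicity together with Propositions~\ref{prop:tilt-LATE} and~\ref{prop:dominance} to rewrite the complier event as $\E(A^{Z_\delta}-A)$ and the numerator as $\E(Y^{A^{Z_\delta}}-Y^{A})$, and the same identification of the stochastic-intervention means as $\iint \E(\cdot \mid Z=u, X=x)\,dQ_\delta(u\mid x)\,d\Pb(x)$. The only difference is one of exposition --- you spell out the change of variables via $Z_\delta = g_\delta(Z,X)$ and the point that $Q_\delta \ll \Pi$ replaces global positivity, steps the paper leaves implicit.
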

\noindent This estimand retains the same basic form as the standard LATE estimand for binary instruments. However, instead of obtaining predicted values under $Z=0$ and $Z=1$, we do so under tilted versions of the observed values of $Z=z$ itself. Note that there are no constraints on the instrument propensity score \(\pi(z \mid X)\). This is because changes in \(q_{\delta}\) following \eqref{eq: tilted intervention} preserves the instrument propensity score when it is zero.

\subsection{Estimation}
The next step is to estimate the identified parameter in \eqref{eq: tilted LATE}. One natural method of estimation for $\psi(\delta)$ is the following plug-in estimator:
\begin{equation}\label{eq:plug-in}
\widehat{\psi}_{\delta}^{reg} = \frac{\Pn \left(\widehat{\gamma}_{\delta}^Y(X) - Y\right)}{\Pn \left(\widehat{\gamma}_{\delta}^A(X) - A\right)},
\end{equation}
where $\widehat{\gamma}_{\delta}^Y(X)$ and $\widehat{\gamma}_{\delta}^A(X)$ are estimates of $\gamma^Y_{\delta}(X) = \frac{\E\left(Ye^{\delta Z}\mid X\right)}{\alpha_{\delta}(X)}$ and $\gamma^A_{\delta}(X) = \frac{\E\left(Ae^{\delta Z}\mid X\right)}{\alpha_{\delta}(X)}$ respectively, with $\alpha_{\delta}(X) = \E\left(e^{\delta Z}\mid X\right)$. We refer to these quantities as nuisance functions on which the estimated $\delta$-tilted LATE depends, but are not of direct interest themselves. 
Estimation of the nuisance functions amounts to a set of standard regression problems, for which we might employ parametric regression models. For example, to estimate $\gamma_{\delta}(X)$, one might assume that the regression functions $\E\left(Ye^{\delta Z}\mid X\right)$ and $\E\left(e^{\delta Z}\mid X\right)$ are linear functions of the covariates. Then, for instance, $\widehat{\gamma}_{\delta}^Y(X)$ is computed as the ratio of the predictions from a linear regression of $Ye^{\delta Z}$ on the covariates to those from a linear regression of $e^{\delta Z}$ on the covariates. Taking the ratio of empirical means computed in a similar manner provides the estimate of $\psi(\delta)$. Using parametric models for the nuisance functions, however, may result in bias from model misspecification. Alternatively, nonparametric estimators, such as kernel smoothing or random forests, could be used for these nuisance functions. However, the plug-in estimator~\eqref{eq:plug-in} for $\psi(\delta)$ will typically inherit any smoothing bias present in the nuisance function estimates \citep{kennedy2016semiparametric, kennedy2022semiparametric}. Specifically, we can describe the plug-in style estimator as having first-order bias, since it will inherit any bias present in the estimates of the nuisance functions.  Consequently, the plug-in estimator may not achieve optimal $\sqrt{n}$-rates of convergence.

Recent advances have sought to address these issues. Specifically, influence function (IF) based estimation offers a significant advantage whereby one can avoid many of the unrealistic parametric assumptions necessary for the plug-in method to achieve fast convergence rates. Using the IF, researchers can construct estimators that possess excellent attributes, such as double robustness and general second-order bias. These attributes permit rapid parametric convergence rates even when nuisance functions are estimated at slower rates through flexible and sophisticated machine learning techniques. The role of the efficient IF is especially noteworthy, as its variance matches the semiparametric efficiency bound, serving as a crucial benchmark for optimal estimator construction \citep{bickel1993efficient}. Influence functions are pivotal for characterizing the asymptotic properties of estimators, since any regular asymptotically linear estimator can be expressed as the empirical average of an influence function plus a negligible $o_P(1/\sqrt{n})$ error term. 

Here, we derive the efficient influence function for the tilted LATE parameter $\psi(\delta)$. As outlined above, this derivation yields the efficiency bound for estimating $\psi(\delta)$, and will be exploited to construct a robust estimator of the tilted LATE. 

\begin{theorem}
    Under the causal conditions the efficient influence function of the functional $\psi(\delta)$ is given by
    \begin{equation}
        \varphi_{\delta}(z, \eta, \psi) =   \frac{\Theta(Y ; \delta, 0) - \psi\Theta(A ; \delta, 0)}{\E\left(\gamma_{\delta}^A(X) - A\right)}
        \label{eq: eff IF}
    \end{equation}
    where for any random variable $T, \Theta(T; \delta, 0) := \xi(T; \delta) - \xi(T; 0)$ for 
    $$
    \xi(T; \delta) := T\frac{e^{\delta Z}}{\alpha_{\delta}(X)} + \gamma_{\delta}^T(X)\left(1 - \frac{e^{\delta Z}}{\alpha_{\delta}(X)}\right). 
    $$
    The nuisance parameter vector $\eta$ consists of $\alpha_{\delta}(X) = \E\left(e^{\delta Z}\mid X\right), \gamma^Y_{\delta}(X) = \frac{\E\left(Ye^{\delta Z}\mid X\right)}{\alpha_{\delta}(X)}$ and $\gamma^A_{\delta}(X) = \frac{\E\left(Ae^{\delta Z}\mid X\right)}{\alpha_{\delta}(X)}$.
\label{thm : eff IF}
\end{theorem}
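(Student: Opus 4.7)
The plan is to reduce the problem to computing the efficient influence function (EIF) of two "tilted-mean" functionals and then combining them via the standard ratio/quotient rule for influence functions. Observe that by iterated expectation,
\[
\iint \mu(u,x)\,dQ_\delta(u\mid x)\,d\Pb(x) \;=\; \E\!\left[\frac{\E(Ye^{\delta Z}\mid X)}{\E(e^{\delta Z}\mid X)}\right] \;=\; \E[\gamma^Y_\delta(X)],
\]
and similarly for $A$. Hence $\psi(\delta) = \beta^Y(\delta)/\beta^A(\delta)$, where $\beta^T(\delta) := \E[\gamma^T_\delta(X)] - \E[T]$ for $T \in \{Y,A\}$. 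Note that when $\delta = 0$ one has $\alpha_0(X) = 1$ and $\gamma^T_0(X) = \E(T\mid X)$, so $\xi(T;0) \equiv T$; thus $\Theta(T;\delta,0) = \xi(T;\delta) - T$, and $\beta^T(\delta) = \E[\Theta(T;\delta,0)]$.

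Next, I would derive the EIF of the single functional $\chi^T(P) := \E_P[\gamma^T_{\delta,P}(X)]$. Introducing a regular parametric submodel $\{P_t\}$ with score $s(O) = s(X) + s(Z\mid X) + s(T\mid X,Z)$ at $t=0$, I would differentiate
\[
\chi^T(P_t) \;=\; \int \frac{\int t\, e^{\delta z}\, p_t(t\mid x,z)\, p_t(z\mid x)\, dz\, d\nu(t)}{\int e^{\delta z}\, p_t(z\mid x)\, dz}\; p_t(x)\, dx
\]
at $t=0$ and, using the quotient rule together with the identity $\E[s(O)\cdot h(O)] = \E[\varphi_\chi(O)\,s(O)]$, read off the EIF. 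The calculation (which mirrors the standard exponential-tilting derivation in \citet{diaz2020causal, schindl2024}) yields
\[
\varphi_{\chi^T}(O) \;=\; \frac{e^{\delta Z}}{\alpha_\delta(X)}\bigl(T - \gamma^T_\delta(X)\bigr) + \gamma^T_\delta(X) - \E[\gamma^T_\delta(X)] \;=\; \xi(T;\delta) - \E[\xi(T;\delta)].
\]
Since the EIF of $\E[T]$ is simply $T - \E[T] = \xi(T;0) - \E[\xi(T;0)]$, subtracting gives the EIF of $\beta^T(\delta)$:
\[
\varphi_{\beta^T}(O) \;=\; \Theta(T;\delta,0) - \E[\Theta(T;\delta,0)].
\]

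Finally, I would apply the ratio rule: if $\psi = \beta^Y/\beta^A$ with denominator bounded away from $0$ (guaranteed by Assumption~\ref{ass:relevance}), then
\[
\varphi_\psi(O) \;=\; \frac{\varphi_{\beta^Y}(O) - \psi\,\varphi_{\beta^A}(O)}{\beta^A(\delta)}.
\]
Substituting the expressions above and using $\E[\Theta(Y;\delta,0)] - \psi\,\E[\Theta(A;\delta,0)] = 0$ (by the definition of $\psi(\delta)$), the centering terms cancel and one obtains exactly \eqref{eq: eff IF}, with denominator $\E[\gamma^A_\delta(X) - A]$.

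The main obstacle I anticipate is the careful handling of the pathwise derivative of $\chi^T$, because the tilt density $q_\delta(z\mid x) \propto e^{\delta z}\pi(z\mid x)$ depends on $P$ through the normalizing constant $\alpha_\delta(X)$; one must differentiate numerator and denominator of $\gamma^T_\delta(X)$ simultaneously and verify that the resulting $\varphi_{\chi^T}$ is mean-zero, has finite variance, and lies in the (nonparametric) tangent space. Once this is established, the ratio rule and algebraic simplification are routine. I would also verify the claim separately under $\delta>0$ and $\delta<0$ (the form is identical), and note that no derivatives with respect to $\pi$ as a nuisance appear in the final EIF, consistent with the double-robust structure highlighted in the estimand's identification.
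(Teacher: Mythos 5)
Your proposal is correct and follows essentially the same route as the paper: decompose $\psi(\delta)$ into the four component functionals, compute the influence function of each tilted mean $\E[\gamma^T_\delta(X)]$ (obtaining $\xi(T;\delta)$ minus its mean), and assemble via the quotient rule, with the centering terms cancelling because $\E[\Theta(Y;\delta,0)] = \psi\,\E[\Theta(A;\delta,0)]$. The only cosmetic difference is that you differentiate along parametric submodels with scores while the paper uses point-mass-contamination Gateaux derivatives; both yield the same EIF for the component functionals.
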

\noindent 
See Appendix~\ref{sec:proofs} for a full proof of Theorem~\ref{thm : eff IF}, along with proofs of all subsequent results. Efficiency theory for the numerator (or equivalently, the denominator) of $\psi(\delta)$ was considered in \citet{diaz2020causal}, in the context of mediation analysis, and in \citet{schindl2024}, in the context of observational causal inference with continuous exposures---our result is consistent with this prior work. Moreover, \citet{schindl2024} develop minimax lower bounds for estimating this numerator functional, establishing a strict dependence on the tilt parameter $\delta$, and we conjecture that an analogous dependence holds for the $\delta$-tilted LATE, $\psi(\delta)$.

An established method for constructing estimators through influence functions involves solving an estimating equation using the estimated influence function as the estimating function. Concretely, we seek to find $\widehat{\psi}$ that solves $\Pn \left\{\varphi_{\delta}\left(z; \widehat{\eta}, \widehat{\psi}\right)\right\}=0$. Building on the formulation of $\varphi_{\delta}$ given in theorem \ref{thm : eff IF}, we propose the influence-function-based estimator as follows:
\begin{equation}
    \widehat{\psi}_{IF}(\delta) = \frac{\Pn \widehat{\Theta}(Y; \delta, 0)}{\Pn \widehat{\Theta}(A; \delta, 0)}
    \label{eq: IF based LATE}
\end{equation}
for $\widehat{\Theta}$ an estimate of $\Theta$. All that remains is to select a nonparametric method to estimate the nuisance parameters. In the next section, we will demonstrate that attaining nonparametric rates for the nuisance parameters may be sufficient for achieving parametric rates for the target causal estimand, $\psi(\delta)$. In practical applications, investigators could also use a SuperLearner \citep{van2007super}, an ensemble learning method, to estimate the parameters $\alpha_{\delta}, \gamma_{\delta}^Y$ and $\gamma_{\delta}^A$.

\subsection{Asymptotic Theory}
This section examines the behavior of our proposed estimator in large samples, demonstrating that it can achieve rapid $\sqrt{n}$-rates of convergence, even when utilizing flexible nonparametric machine learning methods to estimate the nuisance functions.
\begin{theorem}
    Suppose that the nuisance functions in $\eta$ and their estimators are estimated on a separate sample. Further assume that $\|1-\frac{\alpha_{\delta}(X)}{\widehat{\alpha}_{\delta}(X)}\|+\|\widehat{\gamma}_{\delta}^Y(X) - \gamma_{\delta}^Y(X)\|+\|\widehat{\gamma}_{\delta}^A(X) - \gamma_{\delta}^A(X)\| = o_{\Pb}(1)$. Then,
    \begin{equation}
        \widehat{\psi}_{IF}(\delta) - \psi(\delta) = O_{\Pb}\left[1/\sqrt{n} + \left\|1-\frac{\alpha_{\delta}(X)}{\widehat{\alpha}_{\delta}(X)}\right\|\left\{\left\|\widehat{\gamma}_{\delta}^Y(X)-\gamma_{\delta}^Y(X)\right\|+ \left\|\widehat{\gamma}_{\delta}^A(X)-\gamma_{\delta}^A(X)\right\|\right\}\right] 
        \label{eq: convergence}
    \end{equation}
    Moreover, if $\left\|1-\frac{\alpha_{\delta}(X)}{\widehat{\alpha}_{\delta}(X)}\right\|\left\{\left\|\widehat{\gamma}_{\delta}^Y(X)-\gamma_{\delta}^Y(X)\right\|+ \left\|\widehat{\gamma}_{\delta}^A(X)-\gamma_{\delta}^A(X)\right\|\right\} = o_{\Pb}(1/\sqrt{n})$ then
    \begin{equation}
        \sqrt{n}\left(\widehat{\psi}_{IF}(\delta) - \psi(\delta)\right) \overset{d}{\rightarrow} N\{0,\E(\varphi_{\delta} \varphi_{\delta}^{\intercal})\}.
        \label{eq: asymptotic normality}
    \end{equation}
    \label{thm: asymptotic normality}
\end{theorem}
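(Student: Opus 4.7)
The estimator is a ratio, so I would begin from
\begin{equation*}
\widehat\psi_{IF}(\delta) - \psi(\delta) \;=\; \frac{\Pn\!\left[\widehat\Theta(Y;\delta,0) - \psi(\delta)\,\widehat\Theta(A;\delta,0)\right]}{\Pn\widehat\Theta(A;\delta,0)}.
\end{equation*}
Under the assumed consistency of the nuisance estimators, $\Pn\widehat\Theta(A;\delta,0) = \E[\gamma_\delta^A(X) - A] + o_\Pb(1)$, which is nonzero by Assumption~\ref{ass:relevance}. Slutsky's theorem therefore reduces the problem to controlling the numerator. Writing $\varphi_\delta^\star(O) := \Theta(Y;\delta,0) - \psi(\delta)\,\Theta(A;\delta,0)$ and $\widehat\varphi_\delta$ for its plug-in analogue (with $\widehat\eta$ in place of $\eta$), and noting that $\Pb\varphi_\delta^\star = 0$ by the identification formula, I would split
\begin{equation*}
\Pn\widehat\varphi_\delta \;=\; \Pn\varphi_\delta^\star \;+\; (\Pn - \Pb)[\widehat\varphi_\delta - \varphi_\delta^\star] \;+\; \Pb\widehat\varphi_\delta.
\end{equation*}

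The first term is $O_\Pb(1/\sqrt n)$ by the CLT and supplies the limit variance $\E[\varphi_\delta\varphi_\delta^\top]$ appearing in the second display of the theorem. The empirical-process term is handled by sample splitting: conditional on the independent training sample, $\widehat\varphi_\delta - \varphi_\delta^\star$ is a fixed function, and a conditional Chebyshev bound gives $(\Pn - \Pb)[\widehat\varphi_\delta - \varphi_\delta^\star] = O_\Pb(\|\widehat\varphi_\delta - \varphi_\delta^\star\|_{L_2}/\sqrt n) = o_\Pb(1/\sqrt n)$, under the mild regularity that $\widehat\alpha_\delta$ stays bounded away from zero (so that the assumed $L_2$ consistency of the nuisance functions transfers to $\widehat\varphi_\delta$). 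For the bias term $\Pb\widehat\varphi_\delta$, I would iterate on $X$ in each piece of $\Pb\widehat\xi(T;\delta)$, using $\E[Te^{\delta Z}\mid X] = \gamma_\delta^T(X)\,\alpha_\delta(X)$ and $\E[e^{\delta Z}\mid X] = \alpha_\delta(X)$, to obtain
\begin{equation*}
\Pb\widehat\xi(T;\delta) - \E[\gamma_\delta^T(X)] \;=\; \E\!\left[\big(\widehat\gamma_\delta^T(X) - \gamma_\delta^T(X)\big)\!\left(1 - \frac{\alpha_\delta(X)}{\widehat\alpha_\delta(X)}\right)\right].
\end{equation*}
Since $e^{0\cdot Z}\equiv 1$ forces $\widehat\xi(T;0) = T$ trivially, the same expression equals $\Pb[\widehat\Theta(T;\delta,0) - \Theta(T;\delta,0)]$. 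Applying this with $T = Y$ and with $T = A$ and invoking Cauchy--Schwarz delivers exactly the product-of-norms bound appearing in the theorem.

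Assembling the three pieces yields the stated rate; imposing the extra $o_\Pb(1/\sqrt n)$ rate condition on the product makes the bias and empirical-process remainders jointly negligible, so that $\sqrt n(\widehat\psi_{IF}(\delta) - \psi(\delta)) = \sqrt n\,\Pn\varphi_\delta^\star/\E[\gamma_\delta^A(X) - A] + o_\Pb(1)$, and the asymptotic normality conclusion follows from the CLT together with Slutsky's theorem. The main obstacle is the bias step: one must (i) verify that the $\delta = 0$ component of $\Theta$ contributes no bias (handled by $e^{0\cdot Z}\equiv 1$, which makes $\widehat\xi(T;0)$ exact), and (ii) check that after iterating expectations the two nuisance errors enter as a product rather than a sum, which is exactly what produces the doubly-robust rate. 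The remainder is routine bookkeeping of the kind familiar from IF-based estimation with sample splitting.
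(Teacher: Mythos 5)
Your proposal is correct and follows essentially the same route as the paper: the same ratio linearization with the denominator handled by Slutsky, the same three-term split into a CLT term, a sample-splitting-controlled empirical-process term, and a bias term, and the same iterated-expectation plus Cauchy--Schwarz computation yielding the product $\bigl\|1-\alpha_{\delta}/\widehat\alpha_{\delta}\bigr\|\,\bigl\|\widehat\gamma_{\delta}^{T}-\gamma_{\delta}^{T}\bigr\|$. If anything, your write-up is slightly more explicit than the paper's (noting that $\widehat\xi(T;0)=T$ exactly, that $\widehat\alpha_{\delta}$ must stay bounded away from zero, and that relevance guarantees a nonzero limiting denominator), but these are refinements of the same argument rather than a different one.
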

\noindent In Theorem~\ref{thm: asymptotic normality}, we demonstrate that when $\sqrt{n}$-rate convergence is achieved, the asymptotic variance of $\psi(\delta)$ is simply the variance of $\varphi_{\delta}$, making it easy to estimate and enabling straightforward closed-form confidence intervals that are asymptotically valid. Moreover, the estimator $\widehat{\psi}_{IF}(\delta)$ converges at a $\sqrt{n}$-rate and achieves optimal efficiency if the product of the nuisance error rates diminishes faster than $n^{-1/2}$. For instance, this occurs if each nuisance error rate decreases faster than $n^{-1/4}$. This condition does not rely on parametric model assumptions and hence allows for the use of advanced machine learning techniques while maintaining classical inference properties. Additionally, $\widehat{\psi}_{IF}(\delta)$ is doubly robust, meaning it remains consistent if either the regression estimator $\widehat{\alpha}_{\delta}(X)$ or the ratios of regression estimators $\left(\widehat{\gamma}_{\delta}^Y(X), \widehat{\gamma}_{\delta}^A(X)\right)$ are consistently estimated, even if not both. 

While the aforementioned theorem supports pointwise inference at specific $\delta$ values, often a confidence band covering a continuous range of $\delta$ shifts is preferable. This entails creating a $95\%$ confidence band that is uniformly valid across the entire range of $\delta$ values considered. Such bands, known as uniform confidence bands, ensure coverage of the true parameter across all measurements in the range. The subsequent theorem addresses how to construct uniform confidence bands for function-valued parameters.
\begin{theorem}
Under the assumptions of Theorem \ref{thm : eff IF}. Further, assume
\begin{enumerate}[(i)]
    \item $\varphi_{\delta}$ is Lipschitz in $\delta$
    \item $\|\hat{\sigma}(\delta) / \sigma(\delta)\|_{\mathcal{D}}-1=o_p(1)$
    
    \item $\sup _{\delta \in \mathcal{D}}\left\{\left\|\widehat{\gamma}^Y_{\delta}(X) - \gamma^Y_{\delta}(X)\right\|\left\|1-\frac{\alpha_{\delta}(X)}{\widehat{\alpha}_{\delta}(X)}\right\|\right\}=o_p(1 / \sqrt{n})$
    
    \item $\sup _{\delta \in \mathcal{D}}\left\{\left\|\widehat{\gamma}^A_{\delta}(X) - \gamma^A_{\delta}(X)\right\|\left\|1-\frac{\alpha_{\delta}(X)}{\widehat{\alpha}_{\delta}(X)}\right\|\right\}=o_p(1 / \sqrt{n})$
\end{enumerate}
Then,
$$
\sup _{\delta \in \mathcal{D}}\left|\frac{\sqrt{ n} \left\{\hat{\psi}_{\mathrm{IF}}(\delta)-\psi(\delta)\right\}}{\hat{\sigma}(\delta)}-\sqrt{n} \left(\mathbb{P}_n-\mathbb{P}\right) \frac{\varphi_{\delta}(z ; \eta, \psi)}{\sigma(\delta)}\right|=o_p(1 / \sqrt{ } n) .
$$
where $\sigma^2(\delta) = \E(\varphi_{\delta}\varphi_{\delta}^{\intercal})$
\label{thm: uniform normality}
\end{theorem}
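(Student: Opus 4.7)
The plan is to start from an influence-function expansion of the ratio estimator and then show that every remainder after the standardized linear approximation is $o_p(1/\sqrt n)$ \emph{uniformly} in $\delta \in \mathcal{D}$. Concretely, mirroring the argument in Theorem~\ref{thm: asymptotic normality}, I would first decompose
\begin{equation*}
\hat{\psi}_{IF}(\delta) - \psi(\delta) \;=\; (\mathbb{P}_n - \mathbb{P})\varphi_{\delta} \;+\; E_n(\delta) \;+\; R_n(\delta),
\end{equation*}
where $E_n(\delta) = (\mathbb{P}_n - \mathbb{P})(\hat{\varphi}_{\delta} - \varphi_{\delta})$ is the empirical-process term created by plugging in the estimated nuisances, and $R_n(\delta) = \mathbb{P}(\hat{\varphi}_{\delta} - \varphi_{\delta})$ is the second-order ``drift'' term. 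Because $\hat{\psi}_{IF}$ is a ratio, this expansion is obtained by applying the above decomposition separately to the numerator $\mathbb{P}_n\widehat{\Theta}(Y;\delta,0)$ and denominator $\mathbb{P}_n\widehat{\Theta}(A;\delta,0)$, and then combining them via a uniform delta-method for ratios; the Lipschitz condition (i) together with the consistency of the denominator (inherited from the nuisance consistency in Theorem~\ref{thm: asymptotic normality}) makes this combination step valid uniformly on $\mathcal{D}$.

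Second, I would bound the drift $R_n(\delta)$ uniformly. The pointwise calculation already produced in the proof of Theorem~\ref{thm: asymptotic normality} shows that $R_n(\delta)$ factorizes as the products $\|\widehat{\gamma}_\delta^Y - \gamma_\delta^Y\|\,\|1 - \alpha_\delta/\widehat\alpha_\delta\|$ and $\|\widehat{\gamma}_\delta^A - \gamma_\delta^A\|\,\|1 - \alpha_\delta/\widehat\alpha_\delta\|$, up to factors that are uniformly bounded on $\mathcal{D}$ by (i). Thus assumptions (iii)--(iv) immediately give $\sup_{\delta\in\mathcal{D}}|R_n(\delta)| = o_p(1/\sqrt n)$.

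Third, to control $E_n(\delta)$ uniformly I would rely on sample-splitting plus Lipschitz regularity. Conditional on the training half, for each $\delta$, the $L_2(P)$-variance of $\sqrt n\, E_n(\delta)$ is bounded by $\|\hat{\varphi}_\delta - \varphi_\delta\|_{L_2(P)}^2$, which is $o_p(1)$ by the nuisance consistency carried over from Theorem~\ref{thm: asymptotic normality}. Assumption (i) implies that the class $\{\hat{\varphi}_\delta - \varphi_\delta : \delta \in \mathcal{D}\}$ has bracketing entropy of order $\log(1/\varepsilon)$ in $\delta$ (conditional on the training sample), so the class is uniformly Donsker, and a standard maximal inequality (van der Vaart, Lemma~19.34, or Corollary~2.2.5 in van der Vaart and Wellner) upgrades the conditional pointwise bound to $\sup_{\delta}|E_n(\delta)| = o_p(1/\sqrt n)$. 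Combining the last two steps gives $\sup_{\delta}|\sqrt n(\hat{\psi}_{IF}(\delta) - \psi(\delta)) - \sqrt n(\mathbb{P}_n - \mathbb{P})\varphi_\delta| = o_p(1)$, which is in fact already $o_p(1/\sqrt n)$ in absolute, non-scaled, units.

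Fourth, the passage from $\sigma(\delta)$ to $\hat{\sigma}(\delta)$ uses assumption (ii). Writing
\begin{equation*}
\frac{\sqrt n\{\hat{\psi}_{IF}(\delta) - \psi(\delta)\}}{\hat{\sigma}(\delta)} - \sqrt n(\mathbb{P}_n - \mathbb{P})\frac{\varphi_\delta}{\sigma(\delta)}
= \frac{1}{\hat{\sigma}(\delta)}\bigl[\sqrt n\{\hat{\psi}_{IF}(\delta) - \psi(\delta)\} - \sqrt n(\mathbb{P}_n - \mathbb{P})\varphi_\delta\bigr]
+ \sqrt n(\mathbb{P}_n - \mathbb{P})\frac{\varphi_\delta}{\sigma(\delta)}\left(\frac{\sigma(\delta)}{\hat{\sigma}(\delta)} - 1\right),
\end{equation*}
I would bound the first bracket by the result of the previous paragraph (together with $\hat\sigma$ bounded away from zero via (ii)), and the second by combining (ii) with the uniform $O_p(1)$ control of the empirical process $\sqrt n(\mathbb{P}_n - \mathbb{P})\varphi_\delta/\sigma(\delta)$, which is in turn a Donsker consequence of (i).

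The main obstacle is the uniform empirical-process step. It is where the Lipschitz condition (i) earns its keep: the pointwise sample-splitting argument in Theorem~\ref{thm: asymptotic normality} only controls $E_n(\delta)$ at a single $\delta$, and promoting this to a supremum over the uncountable set $\mathcal{D}$ requires a genuine complexity control on $\{\hat\varphi_\delta - \varphi_\delta\}$. Special care is needed because the Lipschitz constant implicit in $\varphi_\delta$ depends on moments of $e^{\delta Z}$, on $\alpha_\delta(X)$, and on the denominator $\mathbb{E}[\gamma_\delta^A(X) - A]$; for the maximal inequality to deliver a uniform bound I would work on a compact $\mathcal{D}$ and add (or verify from the model) a uniform-in-$\delta$ moment bound on $e^{\delta Z}$, which together with Assumption~\ref{ass:relevance} keeps the envelopes and Lipschitz constants uniformly bounded.
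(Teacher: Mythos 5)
The paper does not actually supply a proof of this theorem: its entire argument is the single line ``It follows the proof of Theorem 3 in \citet{mauro2020instrumental}.'' Your reconstruction is, in substance, exactly the argument that such proofs carry out---decompose into the linear term, the empirical-process term $E_n(\delta)=(\Pn-\Pb)(\hat{\varphi}_\delta-\varphi_\delta)$, and the drift $R_n(\delta)=\Pb(\hat{\varphi}_\delta-\varphi_\delta)$; kill the drift uniformly with the product-rate conditions (iii)--(iv); kill $E_n$ uniformly by combining sample splitting with the $O(\log(1/\varepsilon))$ bracketing entropy that the Lipschitz condition (i) provides; and swap $\sigma$ for $\hat{\sigma}$ via (ii) plus the uniform $O_p(1)$ Donsker bound on $\sqrt{n}(\Pn-\Pb)\varphi_\delta/\sigma(\delta)$. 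You also correctly flag the two genuine subtleties that the paper's citation glosses over: the ratio structure requires the denominator $\E(\gamma_\delta^A(X)-A)$ to be bounded away from zero \emph{uniformly} on $\mathcal{D}$ (not just pointwise via Assumption~\ref{ass:relevance}, and this forces $\mathcal{D}$ to exclude a neighborhood of $\delta=0$, where the denominator vanishes), and the Lipschitz constants and envelopes depend on moments of $e^{\delta Z}$ and so need a compact $\mathcal{D}$ with a uniform moment bound.

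One point worth making explicit: your argument delivers
\[
\sup_{\delta\in\mathcal{D}}\left|\sqrt{n}\bigl\{\hat{\psi}_{IF}(\delta)-\psi(\delta)\bigr\}/\hat{\sigma}(\delta)-\sqrt{n}(\Pn-\Pb)\varphi_\delta/\sigma(\delta)\right|=o_p(1),
\]
equivalently $o_p(1/\sqrt{n})$ for the \emph{unscaled} difference---which is what the multiplier-bootstrap construction needs and what the analogous theorem in the cited reference establishes. The theorem as printed, with $\sqrt{n}$ factors inside the supremum \emph{and} $o_p(1/\sqrt{n})$ on the right, is a strictly stronger claim that neither your argument nor the standard one proves; this is almost certainly a typo in the statement rather than a gap in your proof, but you are right not to claim the stronger rate.
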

\noindent This result suggests that we can establish uniform confidence bands for our estimators spanning a continuous spectrum of $\delta$ values with assumptions only slightly more stringent than those for pointwise inference. Specifically, under these conditions, we can employ the multiplier bootstrap method to construct uniform confidence bands, as in \citet{belloni2015, kennedy2019nonparametric}.

\section{Extensions}

Next, we develop two important extensions. Specifically, we focus on profiling of principal strata and a sensitivity analysis for the monotonicity assumption. These are common estimation problems in the canonical IV design. Here, we demonstrate how to implement these tasks under our continuous IV estimation framework.

\subsection{Profiling}
\label{sec: profiling}

One important component of an IV study---particularly when working with a monotonicity assumption---is profiling. In a profiling analysis, the investigator produces descriptive statistics for the principal strata populations: the compliers, always-takers, and never-takers. These principal strata covariate distributions are compared to the overall patient population profile. When there is treatment effect heterogeneity and there are notable differences in these covariate distributions, it implies that the LATE may deviate from the ATE. Profiling was first proposed in \citet{angrist2008mostly} and extended in \citet{Baiocchi:2014}, \citet{marbach2020profiling}, and \citet{takatsu2023doubly}. Extant profiling methods have been developed for the canonical case where the IV is a binary variable. To our knowledge, no methods yet exist for profiling principal strata when the instrument is continuous. Thus, we next develop profiling methods for continuous instruments. Critically, in our framework the definition of the principal strata will depend on the value of the shift parameter $\delta$. That is, shifting $\delta$ higher may increase or decrease the size of the principal strata. Next, we develop DRML-based techniques for nonparametrically estimating the covariate distributions of compliers, always-takers, and never-takers for analyses where $\psi(\delta)$ is the target causal parameter.

First, we denote by $V$ a subset of baseline covariates of interest, i.e., $V \subseteq X$. Initially, we outline profiling methods for when $V$ is a discrete random variable, and briefly consider continuous $V$ below. Under the assumptions in Section~\ref{sec:assump} (notably, Assumption~\ref{ass:mono}), we can identify the probability of the baseline covariates taking value $V = v_0$ among the compliers (for $\delta > 0$) as follows:
\begin{equation}
    \Pb(V = v_0 \mid A^{Z_{\delta}} > A) = \frac{\E[D_{\psi}I(V = v_0)]}{\E[D_{\psi}]},
    \label{eq: comp profiling}
\end{equation}
where $D_{\psi} = \gamma_{\delta}^A(X) - A$. Also note that $\E(D_{\psi})$ is the denominator of the LATE parameter $\psi(\delta)$ and can be estimated using the uncentered influence function of the denominator i.e., $\{1-\frac{e^{\delta R}}{\alpha_{\delta}(X)}\}(\gamma_{\delta}^A(X) - A)$. Therefore, the influence function based estimator of \eqref{eq: comp profiling} is given by
\begin{equation}
    \widehat{\Pb}(V = v_0 \mid A^{Z_{\delta}} > A) = \frac{\Pn\left[\left(\{1-\frac{e^{\delta R}}{\widehat{\alpha}_{\delta}(X)}\}(\widehat{\gamma}_{\delta}^A(X)-A)\right)I(V = v_0)\right]}{\Pn\left[\left(\{1-\frac{e^{\delta R}}{\widehat{\alpha}_{\delta}(X)}\}(\widehat{\gamma}_{\delta}^A(X)-A)\right)\right]}.
\end{equation}
\noindent It follows by Theorem \ref{thm: asymptotic normality} that we can construct a $100(1-\alpha)\%$ confidence interval for $\Pb(V = v_0 \mid A^{Z_{\delta}} > A)$ as
\begin{equation}
    \left[\widehat{\Pb}(V = v_0 \mid A^{Z_{\delta}} > A) - z_{\alpha/2}\E(\varphi_{\delta,c}\varphi_{\delta,c}^{\intercal})/\sqrt{n}, \, \widehat{\Pb}(V = v_0 \mid A^{Z_{\delta}} > A) + z_{\alpha/2}\E(\varphi_{\delta,c}\varphi_{\delta,c}^{\intercal})/\sqrt{n}\right]
\end{equation}
where $\varphi_{\delta,c}$ is same as $\varphi_{\delta}$ in \eqref{eq: eff IF} with $\Theta(Y; \delta, 0)$ replaced by $\Theta(A; \delta, 0)1(V = v_0)$. The probability of compliance for $\delta < 0$, $\Pb(V = v_0 \mid A^{Z_{\delta}} < A)$, can be addressed in a similar manner and can be shown to yield the same expression. Moreover, in case of continuous $V$, we propose to replace $I(V = v_0)$ with $K_{v_0,h}$ where $ K_{v_0,h}$ is the map $v \mapsto \frac{1}{h}K\left(\frac{v - v_0}{h}\right)$, for arbitrary kernel function $K$ and bandwidth parameter $h > 0$ \citep{takatsu2023doubly}.

An expression similar to \eqref{eq: comp profiling} can be derived for always-takers ($A^{Q_{\delta}} = 1 = A$):
\begin{equation}
\Pb\left(V = v_0 \mid A^{Z_{\delta}} = A = 1\right) = \frac{\E\left(1(V = v_0)A \right)}{\E(A)}.
\label{eq: at profiling}
\end{equation}
Next, the expression for never-takers ($A^{Z_{\delta}} = 0 = A$) is
\begin{equation}
\Pb\left(V = v_0 \mid A^{Z_{\delta}} = A = 0\right) = \frac{\E\left(1(V = v_0)\{1 - \gamma_{\delta}^A(X)\}\right)}{\E\left(1 - \gamma_{\delta}^A(X)\right)}.
\label{eq: nt profiling}
\end{equation}
Notably, the expression for always-takers does not depend on $\delta$, unlike that for compliers and never-takers. The expressions for never-takers and always-takers when \(\delta < 0\) are given in the Appendix~\ref{sec: app profiling}. Construction of confidence intervals for when $V$ is continuous follows a similar logic as to when $V$ is discrete. Finally, in Appendix~\ref{sec: app defier}, we include some brief comments on how, when relaxing the monotonicity assumption (Assumption~\ref{ass:mono}), the observed proportion of compliers constrains the proportion of defiers for our proposed estimand.

\subsection{Sensitivity Analysis}
\label{sec: sensitivity}
As we stipulated in Section~\ref{sec:assump}, we have assumed that monotonicity holds. However, a number of authors have noted that the monotonicity assumption may be less plausible in settings with continuous instruments \citep{swanson2014think,swanson2017challenging,small2017instrumental}. Specifically, monotonicity violations may be more likely when the instrument is not delivered uniformly to all subjects, which is generally the case for continuous instruments. Given this possibility, we extend a method of sensitivity analysis method introduced in \cite{takatsu2023doubly} to the continuous instrument setting. This form of sensitivity analysis is predicated on a result in \citet{imbens1994}, which demonstrated that when monotonicity is violated, the IV estimand can be written as the sum of the standard IV estimand and a term that depends on two, additional parameters.

In our context, when the monotonicity assumption is violated, then the $\delta$-tilted LATE can be decomposed into the functional $\psi(\delta)$ in equation \eqref{eq: tilted LATE} and an additional term that depends on two parameters: $\gamma_1(\delta)$ and $\gamma_2(\delta)$. We provide here an explanation for $\delta > 0$, and note that similar expressions can be derived for $\delta < 0$ (see Appendix \ref{sec: app sensitivity} for details):
\begin{equation}
    \E\left(Y^1 - Y^0 \mid A^{Z_{\delta}} > A\right) = \psi(\delta) + \frac{\gamma_1(\delta)\gamma_2(\delta)}{\E\left(A^{Z_{\delta}} - A\right)},
\end{equation}
where for a pre-specified $\delta, \gamma_1(\delta)$ signifies the proportion of defiers, while $\gamma_2(\delta)$ represents the discrepancy in average treatment effects between defiers and compliers. Notably, these defiers and compliers are defined in relation to the tilted stochastic intervention $Q_{\delta}(z \mid X)$. Specifically, $\gamma_1(\delta)$ and $\gamma_2(\delta)$ are defined as
\begin{equation*}
    \begin{aligned}
        \gamma_1(\delta) & := \Pb\left(A^{Z_{\delta}} < A\right) \\
        \gamma_2(\delta) & := \E\left(Y^1 - Y^0 \mid A^{Z_{\delta}} < A\right) - \E\left(Y^1 - Y^0 \mid A^{Z_{\delta}} > A\right)
    \end{aligned}
\end{equation*}


Notably, larger values of $\gamma_1(\delta)$ and $\gamma_2(\delta)$ indicate a greater disparity between $\psi(\delta)$ and the tilted LATE. To address this, we utilize influence function-based estimators for $\psi(\delta)$ and $\E\left(A^{Z_{\delta}} - A\right)$, and we examine the following quantity for any fixed $(\gamma_1(\delta),\gamma_2(\delta))$: 
\begin{equation}
    \widehat{\xi}_{\delta}(\gamma_1(\delta),\gamma_2(\delta)) \coloneqq \widehat{\psi}_{IF}(\delta) + \frac{\gamma_1(\delta) \gamma_2(\delta)}{\Pb_n\left[\left(\{1-\frac{e^{\delta R}}{\widehat{\alpha}_{\delta}(X)}\}(\widehat{\gamma}_{\delta}^A(X)-A)\right)\right]}
    \label{eq: sensitivity}
\end{equation}
This expression quantifies the estimated difference between the true $\delta$-tilted LATE and the identified parameter $\psi(\delta)$, as a function of $\gamma_1(\delta)$ and $\gamma_2(\delta)$ for a fixed $\delta$. One can then use this expression to visualize the estimated upper or lower bounds on LATE jointly as a function of $\gamma_1(\delta)$ and $\gamma_2(\delta)$. For a fixed $\delta$, any values $\gamma_1(\delta),\gamma_2(\delta)$ that satisfy $\widehat{\xi}_{\delta}(\gamma_1(\delta),\gamma_2(\delta)) = 0$ represent a minimal violation of the monotonicity criterion leading to a change in sign in the estimated LATE. Subsequently, using subject matter expertise, one can then assess whether the values of $\gamma_1(\delta)$ and $\gamma_2(\delta)$ at this boundary appear plausible in a given scientific context---if not, this may be seen to provide evidence of the robustness of results to the monotonicity assumption.

\section{Simulation}
\label{sec: simulation}
Next, we study the properties of our proposed methods using simulation. First, we focus on the general statistical properties of our methods. In the second simulation, we focus on inferential properties. In both studies, we use a common data generating process (DGP) that we describe next.  Specifically, we simulate data from the following set of models:
\begin{equation}
    \begin{aligned}
        (Y^0, X) & \sim N(0,I_5), \\
        Z \mid X, Y^0 & \sim N(\alpha^{\intercal}X, 2), \\
        A & = I(Z \geq Y^0), \\
        Y & = Y_0 + \psi A.
    \end{aligned}
\end{equation}
In the simulations, we set $\alpha = \left(1, 1, -1, -1\right)$, and the causal effect $\psi =2$. Note that this DGP is consistent with all the identifying assumptions we've outlined above.

\subsection{Simulation Study 1}

In the first simulation study, we compare our proposed influence-function-based estimators to a plug-in estimator. The plug-in estimator is roughly equivalent to a two-stage least-squares-type estimator, adapted to estimate the same parameter as the shift estimator. Our influence-function-based estimator $\widehat{\psi}_{IF}$ in \eqref{eq: IF based LATE} offers users the flexibility to choose a variety of methods for estimating the nuisance parameters $\alpha_{\delta}(\cdot), \gamma_{\delta}^Y(\cdot)$ and $\gamma_{\delta}^A(\cdot)$. Here, we use SuperLearner, an ensemble learning technique, to estimate the nuisance functions. Specifically, we use an ensemble of learners that include a generalized additive model, a generalized linear model, and a random forest, among others. This collection of learners allows for flexible estimation of the nuisance functions, mitigating the possibility of bias from using parametric models. We assessed estimator performance via integrated bias and root-mean-squared error:
\begin{equation}
\widehat{\operatorname{bias}}=\frac{1}{I} \sum_{i=1}^I\left|\frac{1}{J} \sum_{j=1}^J \hat{\psi}_j\left(\delta_i\right)-\psi\left(\delta_i\right)\right|, \quad \widehat{\operatorname{RMSE}}=\frac{\sqrt{n}}{I} \sum_{i=1}^I\left[\frac{1}{J} \sum_{j=1}^J\left\{\hat{\psi}_j\left(\delta_i\right)-\psi\left(\delta_i\right)\right\}^2\right]^{1 / 2}
\end{equation}
across 12 equispaced values of $\delta \in [-0.85,0.85]$. In the simulations, we used 500 replications with sample sizes of $500$, $1000$, and $5000$ for each value of $\delta$. 

The results are summarized in Figure~\ref{fig: Ibias} and Figure~\ref{fig: Irmse}.

\begin{figure}[ht!]
    \centering
    \begin{minipage}[b]{0.45\linewidth}
        \centering
        \includegraphics[width=\linewidth]{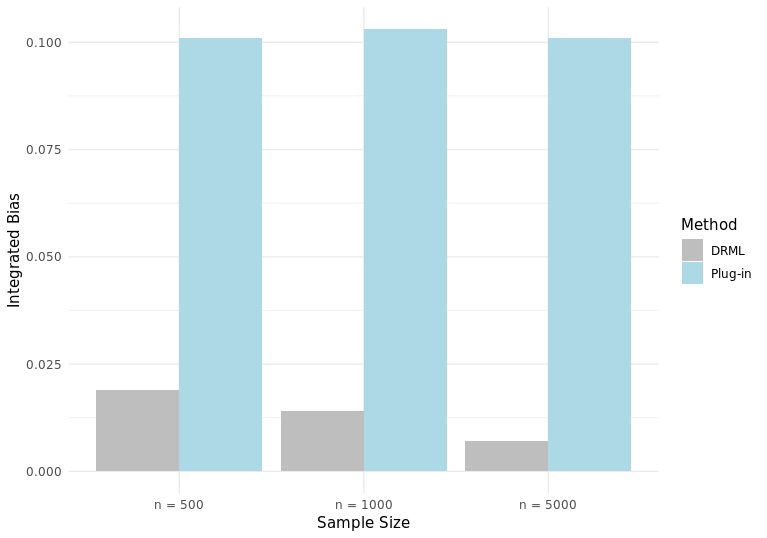}
        \caption{Integrated bias}
        \label{fig: Ibias}
    \end{minipage}
    \hspace{0.05\linewidth}
    \begin{minipage}[b]{0.45\linewidth}
        \centering
        \includegraphics[width=\linewidth]{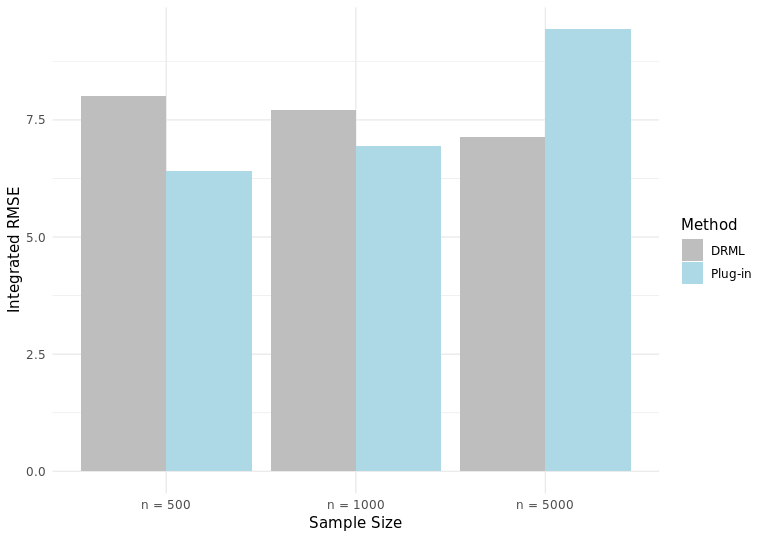}
        \caption{Integrated root mean squared error}
        \label{fig: Irmse}
    \end{minipage}
\end{figure}

When we compare the proposed DRML estimator to the plug-in estimator, we notice a substantial reduction in the bias for all three sample sizes. (see Fig. \ref{fig: Ibias}). In fact, the bias is more than 50\% lower for the DRML estimator. However, this reduction in bias comes at the cost of increased variance. For sample sizes of 500 and 1000, the plug-in estimator outperforms the DRML estimator on the RMSE metric. Once the sample size is 5000, the DRML estimator has a lower RMSE than the plug-in estimator. As such, the DRML estimator does require larger sample sizes to outperform the plug-in estimator for the RMSE metric.

\subsection{Simulation Study 2}

In the second simulation study, we focus on inference directly via the performance of the confidence intervals. Here, we compare confidence intervals based on the asymptotic standard deviation to the uniform confidence bands based on the multiplier bootstrap method. First, we seek to understand how the performance of the confidence intervals changes as $\delta$ changes. In Figure~\ref{fig: All_n}, we plotted both confidence intervals and point estimates for both the DRML estimator and the plug-in estimator for the same three sample sizes in the first simulation study.

In all scenarios, the confidence intervals (CI) for the DRML estimator consistently covered the true value. Moreover, as the sample size increased, the length of the CI decreased. Notably, the DRML estimator exhibited equivalent or superior performance compared to the plug-in estimator. In particular, the plug-in estimator demonstrated erratic behavior as $\delta$ approached $0$, but the DRML estimator did not. For larger sample sizes, this erratic behavior leads to poor coverage.

\begin{center}
    \begin{figure}[ht!]
        \centering
        \includegraphics[scale = 0.6]{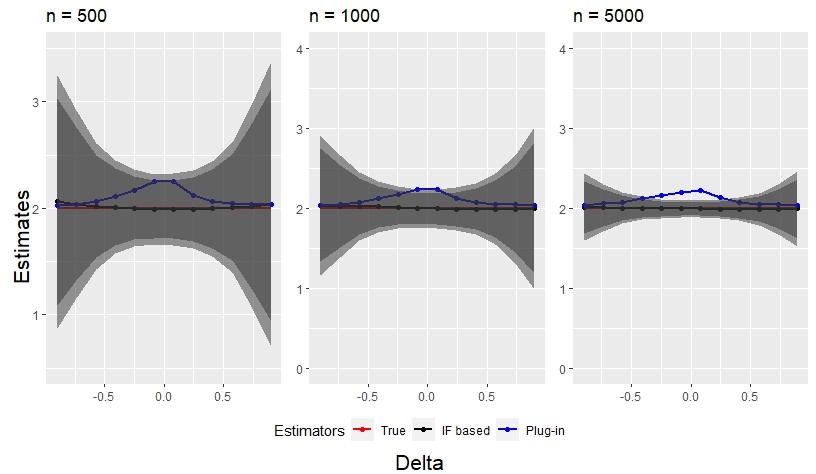}
        \caption{Plug-in (blue) and influence-function-based (black) estimators across $\delta$ values for sample size $n \in \{500, 1000, 5000\}$. CI based on theoretical standard deviation (dark grey) and uniform confidence bands (light grey) are shown.}
        \label{fig: All_n}
    \end{figure}
\end{center}

Next, we further studied the performance of confidence intervals via empirical coverage. In Figure~\ref{fig: coverage-trend}, we plotted 95\% empirical coverage for the DRML-estimator against the $\delta$ values. In Figure~\ref{fig: coverage-trend} each panel contains the results for the three different sample sizes of $n = 500$, $n = 1000$, and $n = 5000$. The confidence intervals display a trend where the coverage drops significantly as $\delta$ approaches zero. Most likely this occurs due to the fact that when $\delta \to 0$ the weight $1 - \frac{e^{\delta Z}}{\E(e^{\delta Z}\mid X)} \to 0$, making the DRML ratio estimate $\widehat{\psi}_{IF}$ approach the ratio $0/0$, which results in poor performance. Towards the more extreme values of $\delta$, the behavior becomes more unpredictable, but generally, when values of $\delta$ are large and positive, the coverage tends to be lower. This could be because larger $\delta$ values mean larger deviations from the observed data. This is analytically supported by the relationship $\frac{q_{\delta}(Z \mid X)}{\pi(Z \mid X)} = \frac{e^{\delta Z}}{\int e^{\delta u}\pi(u \mid X)du} \leq e^{C\delta}$, which indicates that the ratio can increase exponentially with $\delta$. 

\begin{center}
    \begin{figure}[ht!]
        \centering
        \includegraphics[scale = 0.6]{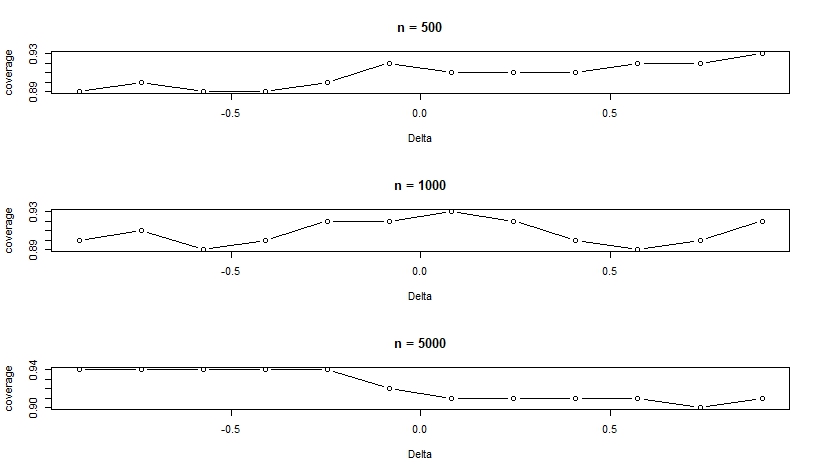}
        \caption{Empirical coverage averaged over 500 rounds of simulations over different values of $\delta$ for DRML based estimator. The top, middle and bottom panels correspond to $n = 500 ,1000, 5000$ respectively.}
        \label{fig: coverage-trend}
    \end{figure}
\end{center}

\section{Application}

Emergency general surgery (EGS) refers to medical emergencies where the injury is internal (e.g., a burst appendix) while trauma care refers to injuries that are external (e.g., a gunshot wound). There are 51 medical conditions that have been designated as EGS conditions that are distinct from trauma conditions. Patient with EGS conditions account for more than 3-4 million hospital admissions per year with more than 800,000 operations in the United States \citep{gale2014public,shafi2013emergency,scott2016use,havens2015excess,ogola2015financial}. Recent research has focused on the comparative effectiveness of operative versus non-operative care for various EGS conditions \citep{hutchings2022effectiveness,kaufman2022operative,moler2022local,grieve2023clinical}.  Assessing the effectiveness of operative care EGS conditions is difficult since allocation of treatment is likely subject to confounding by indication. In addition, many studies of EGS use medical claims data, and these data may fail to completely remove bias given that prognostic factors are incompletely recorded.

To assess the effectiveness of operative care, researchers have used a physician's preference for a specific treatment as an IV \citep{brookhart2007preference,keeleegsiv2018}. In these studies, each physician preference for operative care serves as a ``nudge'' toward a specific mode of care; where the analyst assumes that the physician's preference has no direct effect on patient outcomes. Specifically, these studies have used a surgeon's tendency to operate (TTO) as an instrument to determine whether a patient receives surgery after admission to the emergency department \citep{keeleegsiv2018}. The assignment of surgeons to patients is plausibly as-if randomized, since this study focuses on patients who are receiving emergency care and are unlikely to be able to select their physician. \citet{keeleegsiv2018} first proposed using TTO as an IV. TTO has been validated in a variety of settings using a wide variety of diagnostic tools \citep{keeleegsiv2018,keeleexrest2018,hutchings2022effectiveness,kaufman2022operative,moler2022local,grieve2023clinical}.

In this application, the outcome $Y$ is a binary indicator for an adverse event (a complication or prolonged length of stay). The treatment $A$ is binary, where $A=1$ indicates that a patient received operative care after their emergency admission, while $A = 0$ indicates that a patient received non-operative care. The instrument $Z$ is measured using the percentage of times a surgeon operates when presented with an emergency surgery condition. Extant work has converted this continuous measure into a binary IV by recoding it such that $Z=1$ when a surgeon has TTO value higher than the sample median \citep{keeleegsiv2018}. In other words, an IV value of 1 indicates that a patient has been assigned to a high TTO surgeon and thus are more likely to receive operative care. Here, we re-analyze data from \citet{takatsu2023doubly} who used an IV of this type to study the effectiveness of surgery for cholecystitis patients. In our analysis, we do not use a binary version of the IV. Instead, we use the continuous version of the IV which is recorded as a proportion of the times a surgeon operated. 

 This data is based on data from the American Medical Association (AMA) Physician Masterfile merged with all-payer hospital discharge claims from New York, Florida and Pennsylvania in 2012-2013. The study population includes all patients admitted for emergency or urgent inpatient care. The data includes patient sociodemographic and clinical characteristics, such as indicators for frailty, severe sepsis or septic shock, and 31 comorbidities based on Elixhauser indices \citep{elixhauser1998comorbidity}. Each patient is linked with his or her surgeon through a unique identifier. The primary outcome is the presence of an adverse event after surgery, i.e., death or a prolonged length of stay in the hospital (PLOS). Finally, baseline covariates $\boldsymbol{X}$ include race, sex, age, insurance type, baseline frailty measures, 31 comorbidities, and hospital fixed effects. In our data, there are 6,184 surgeons and the IV takes on 3,278 unique values that range from 0.01 to 1. Finally, there are 123,420 patients. 

We now apply our proposed methods to the question of whether operative care has an effect on the risk of an adverse event. 
First, we outline how to interpret the results in this context. In this application, the target estimand represents the average treatment effect for patients who would receive surgery if the distribution of TTO was set to $Q_{\delta}$, rather than the observed distribution of TTO. That is, we consider the effect of surgery had TTO been stochastically elevated by $\delta$. Compliers are those patients who receive surgery when TTO is stochastically elevated by $\delta$ but do not receive surgery when TTO is at its current level.  Here, we use a range of $\delta$ parameters that produce larger stochastic shifts in the distribution of TTO.  Each of these possible shifts defines a different complier population and leads to estimates for operative care for slightly different local populations. For example, each shift defines a set of patients who receive surgery when their surgeon's TTO value is stochastically increased by $\delta$, but do not receive surgery at the current TTO level. Critically, our estimator relaxes the strong assumption of positivity, which requires that all patients have a positive probability of having a high TTO surgeon.



We use 5 folds and estimated nuisance functions via random forest fits. Figure~\ref{fig:est.plot} contains the results with the uniform confidence bands. The estimated change in the risk of having an adverse event when patients receive operative care ranges from -0.026\% to -0.029\%.  At all levels of $\delta$, the 95\% and uniform confidence bands do not include zero implying that there is a statistically significant effect. The complier population increases from 2.8\% at $\delta_{min}$ to 17\% at $\delta_{max}$. The confidence bands widen slightly as the shifts get larger. One useful feature of the multiplier bootstrap is that we can test a hypothesis of effect homogeneity. To do so, we check to see if we can draw a horizontal line through the uniform confidence bands. We find we cannot reject a homogeneous effect of -0.025.

\begin{figure}[htbp]
  \centering
    \includegraphics[scale=0.6]{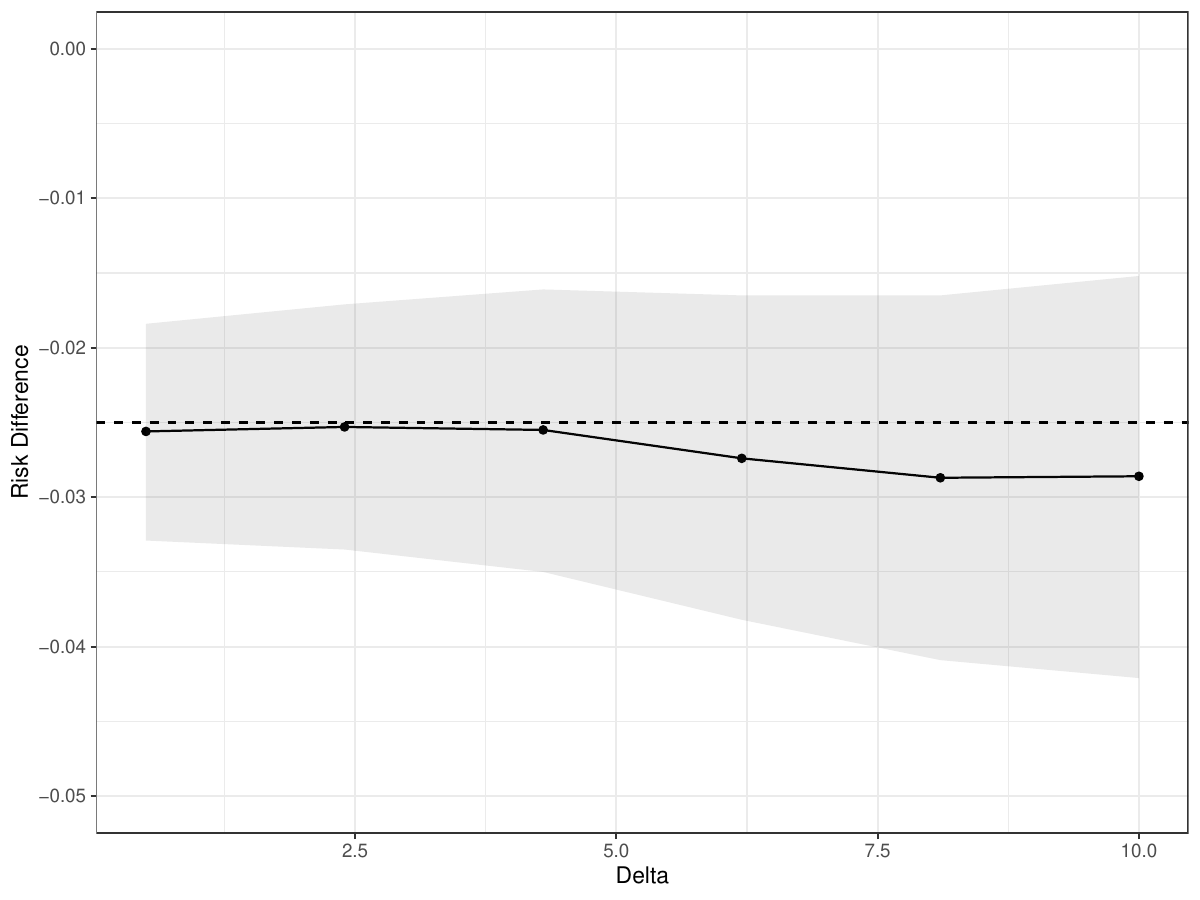}
     \caption{Estimated effect of surgery on the risk of an adverse event by shifting TTO}
  \label{fig:est.plot}
\end{figure}

Next, we profile the principal strata as a function of the values for $\delta$. We examine the results for two key variables: age and sepsis. Figure~\ref{fig:sepsis.plot} contains the results for patients with sepsis. The plot does not include results for always-takers, since the probability of being an always-taker does not depend on $\delta$. We see that for small values of $\delta$, the probability of being a complier or never-taker is nearly identical at 0.12. However, as the magnitude of $\delta$ increases the probability of a patient being a never-taker decreases. 

\begin{figure}[htbp]
  \centering
    \includegraphics[scale=0.4]{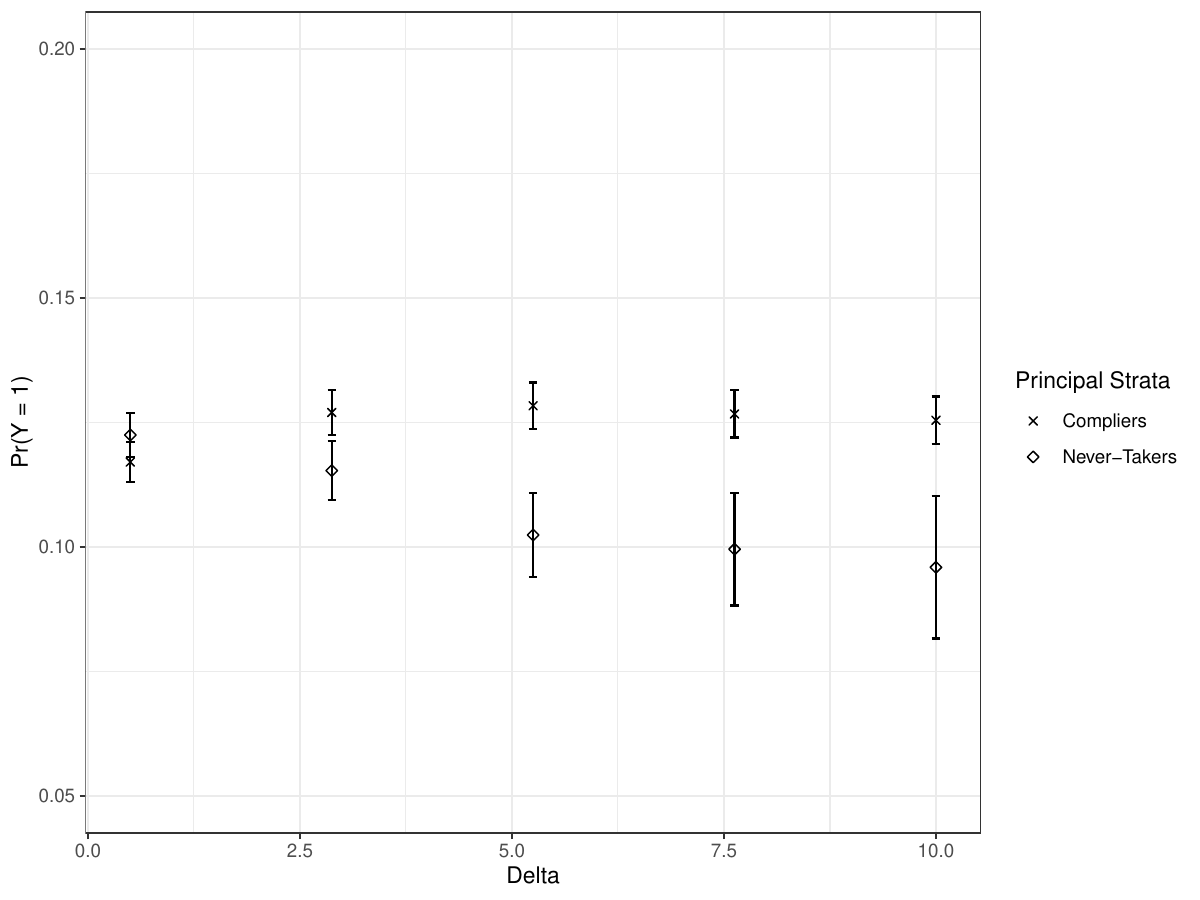}
     \caption{Estimated probability of principal strata for septic patients.}
  \label{fig:sepsis.plot}
\end{figure}

Next, we implemented a profiling analysis for age. In the first analysis, we fixed the value of age at the median value of 62, and then plotted the density value for compliers and never-takers as a function of $\delta$.  Figure~\ref{fig:a1} contains these results. In this case, the estimated densities for compliers and never-takers overlap across the range of $\delta$ values. Next, we fixed the value of $\delta$ and estimated the densities at seven equally spaced values across the support of the age distribution. Results are in Figure~\ref{fig:a2}.  We observe minor differences between compliers and never-takers, but the general pattern is the same. Patients are far more likely to be compliers and never-takers in the middle of the age distribution.

\begin{figure}[ht]
  \centering
  \begin{subfigure}[t]{0.45\textwidth}
    \includegraphics[width=\textwidth]{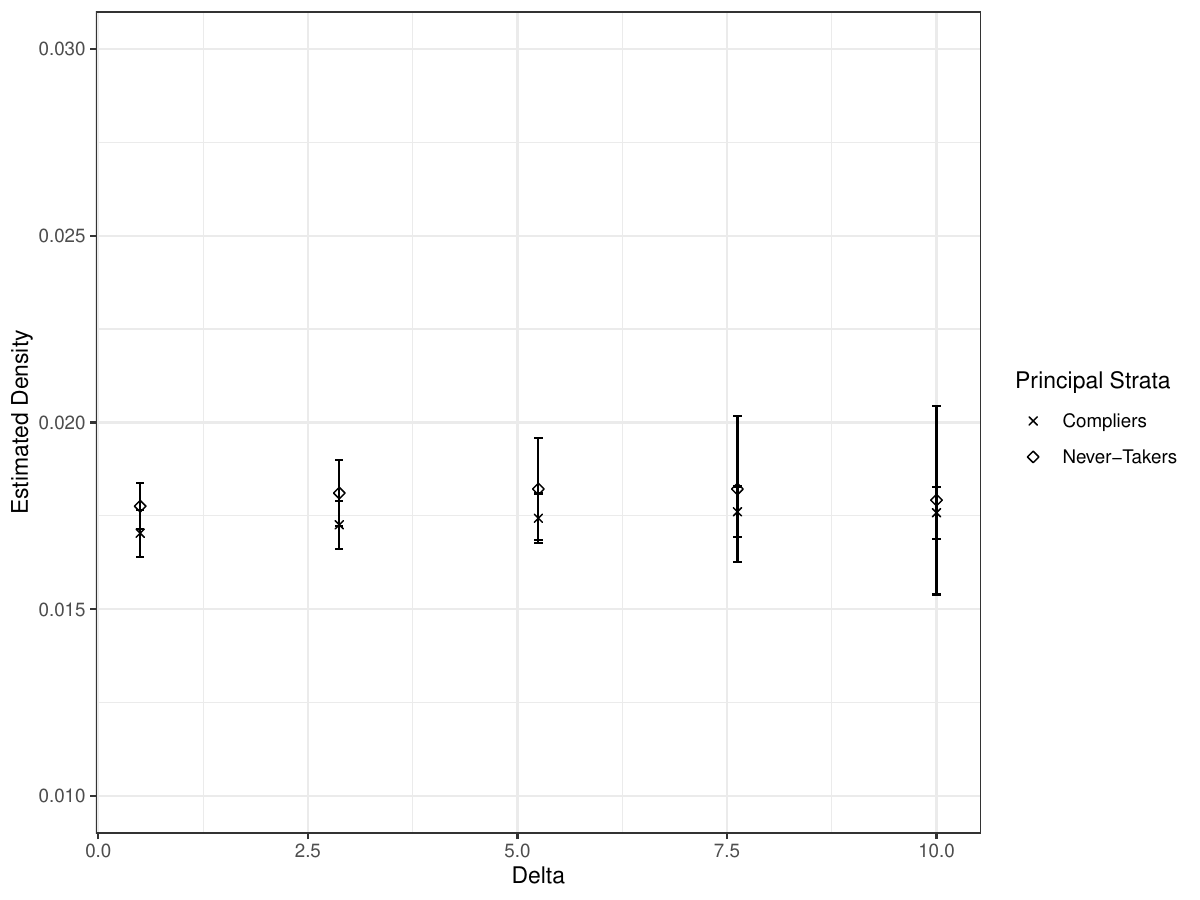}
    \caption{Age density by $\delta$}
    \label{fig:a1}
  \end{subfigure}
  \begin{subfigure}[t]{0.45\textwidth}
    \includegraphics[width=\textwidth]{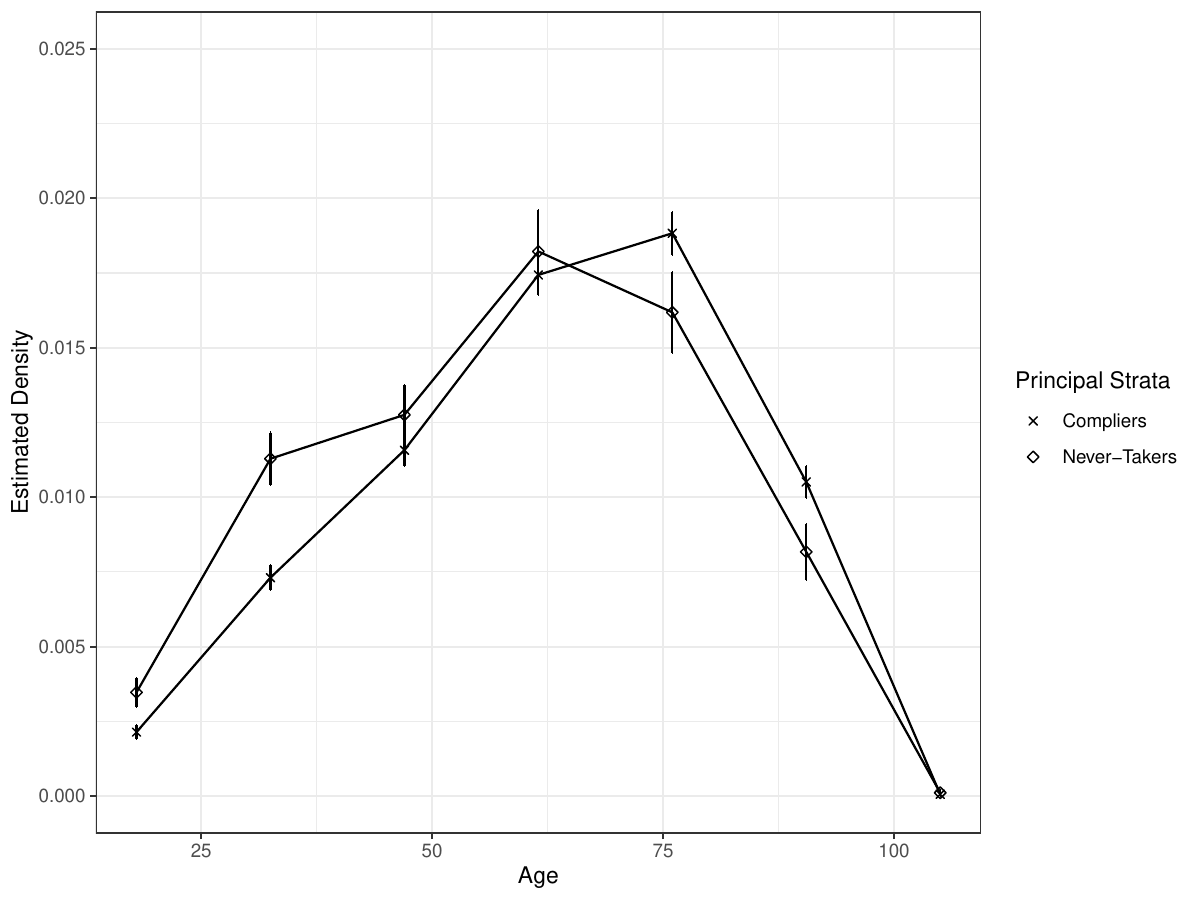}
    \caption{Age density across range of support}
    \label{fig:a2}
  \end{subfigure}
  \caption{Profiling principal strata for age as function of $\delta$ and support of the age distribution.}
  \label{fig:age-profile}
\end{figure} 

Next, we report the results from the sensitivity analysis for the monotoncity assumption. For this analysis, we generate point estimates across a range of values for the two sensitivity parameters: $\gamma_1(\delta)$ and $\gamma_2(\delta)$, Importantly, these parameters depend on $\delta$. Given this dependence, we implemented two separate sensitivity analyses: one for the minimum value of $\delta$ we used and one for the maximum value of $\delta$ we used. The results are in Figure~\ref{fig:mono-sens}. For both analyses, we fixed the ranges of both $\gamma_1(\delta)$ and $\gamma_2(\delta)$. The dotted line in each figure represents the point where the sign of the point estimate changes as a function of the two parameters. Of interest, the estimates are far more robust to violations of the monotonicity assumption for the larger value of $\delta$. 

\begin{figure}[ht]
  \centering
  \begin{subfigure}[t]{0.45\textwidth}
    \includegraphics[width=\textwidth]{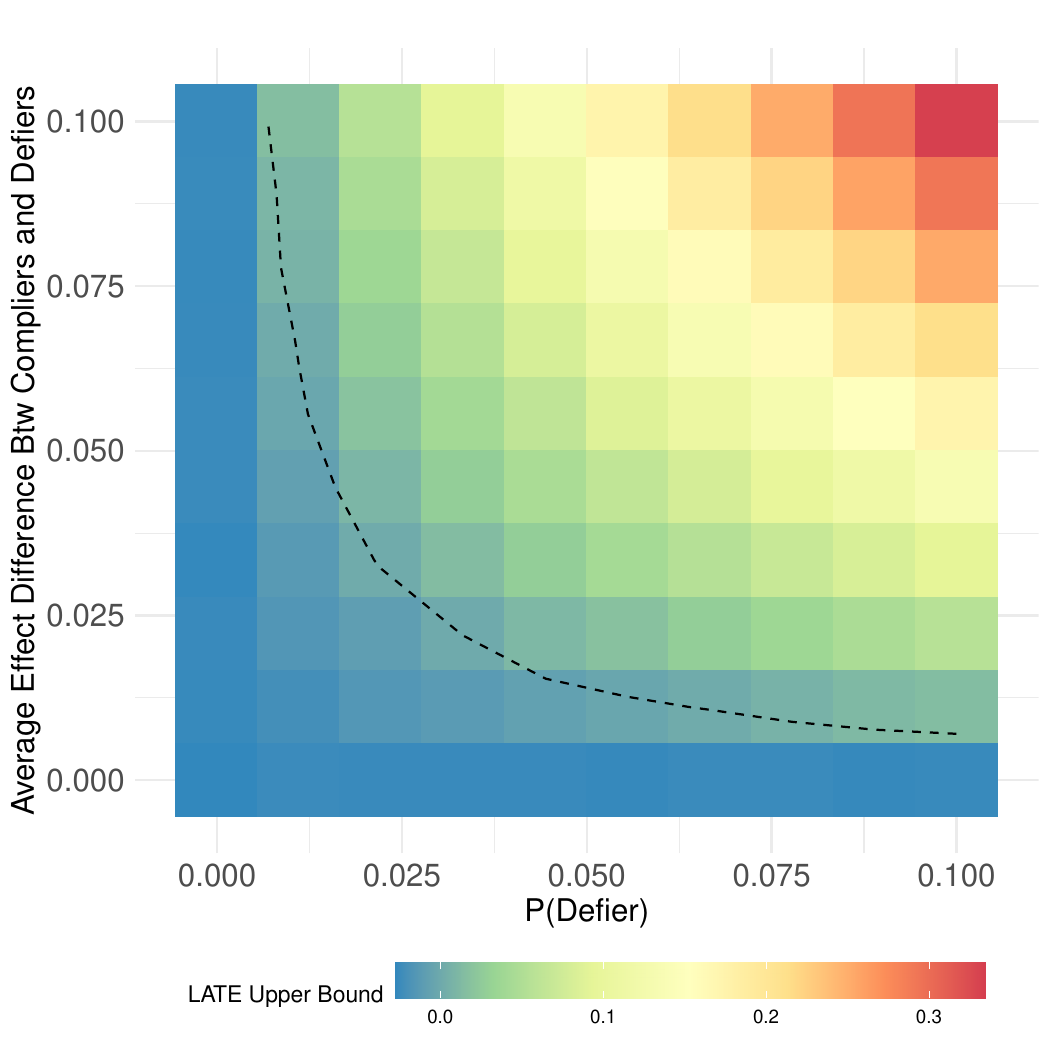}
    \caption{Results for minimum $\delta$ value.}
    \label{fig:s1}
  \end{subfigure}
  \begin{subfigure}[t]{0.45\textwidth}
    \includegraphics[width=\textwidth]{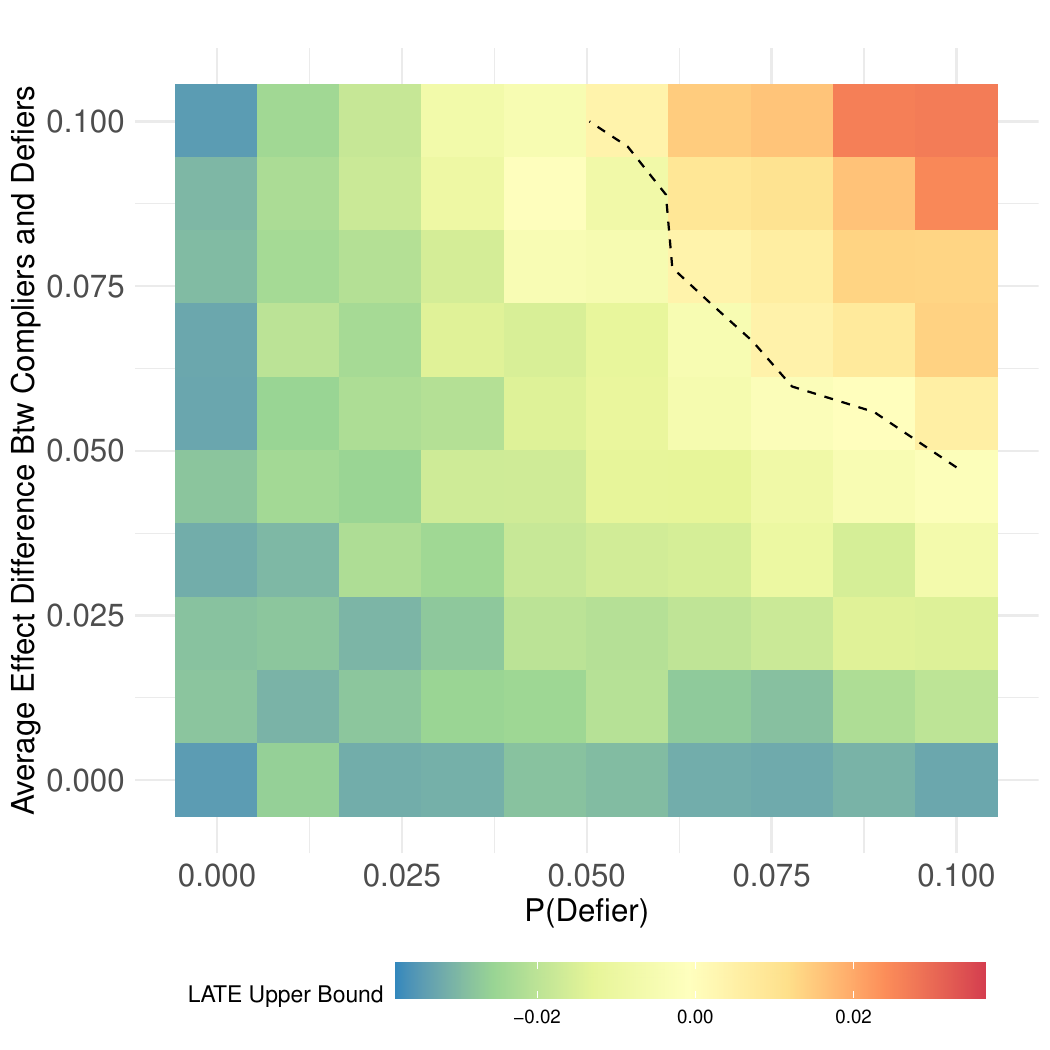}
    \caption{Results for maximum $\delta$ value.}
    \label{fig:s2}
  \end{subfigure}
  \caption{Sensitivity analysis results for the monotonicity assumption. Dotted line represents frontier at which the treatment effect changes sign.}
  \label{fig:mono-sens}
\end{figure}

\section{Discussion}

In this paper, we introduced a method for estimating effects based on dynamic interventions for instrumental variables. Our methods allow researchers to study the impact of modifying each individual's observed instrument value rather than fixing it at a specific level. We demonstrated that this effect can be identified and estimated without relying on a positivity assumption. In addition, we established a general efficiency theory, and we created an influence-function-based estimator that achieves fast convergence rates, even when flexible nonparametric methods are used for estimating nuisance functions. Additionally, we developed methods for uniform inference, profiling principal strata that depend on $\delta$, and a sensitivity analysis for the monotonicity assumption.

It is worth revisiting Proposition~\ref{prop:dominance} in section \ref{sec:proposed}. Proposition~\ref{prop:dominance} both  establishes monotonicity for defining and identifying the $\delta$-tilted Local Average Treatment Effect (LATE) and aids in the interpretation of this estimand. We demonstrated that $\psi(\delta)$ reflects the average treatment effect for individuals who would opt for treatment if the observed instrument were replaced by a ``stochastically'' elevated version, $Z_\delta$, yet remain untreated under the observed instrument $Z$. This implies a higher likelihood of treatment initiation when the instrumental variable is ''stochastically'' elevated by $\delta$, thereby defining the corresponding complier population.

We applied our methods to data on the effectiveness of operative care for patients with cholecystitis. In this application, the IV is a continuous measure of the proportion of times surgeons operate. Our methods allow us to avoid using a binary version of this instrument, which requires arbitrary choices about dichotomization of the IV. In addition, we are able to avoid the positivity assumption, which requires that all patients have an equal chance of being treated by physicions with the same tendency to operate. We also included a profiling analysis that delves into how patient characteristics influence compliance with hypothetical shifts in surgeon behavior and subsequent treatment outcomes. These findings provide nuanced understanding of how patient characteristics interact with treatment decisions and outcomes, informing more targeted and effective healthcare interventions. Finally, we observe that our study is more robust to violations of the monotoncity assumption for larger values of $\delta$.

One avenue for future research would be to establish a minimax lower bound for the tilted LATE parameter. This would allow a more complete investigation of how this parameter depends on $\delta$ under more extreme shifts. Additionally, future research could also consider deriving the concept of instrument sharpness for the tilted LATE parameter.

\section*{References}
\bibliographystyle{asa}
\bibliography{bibliography.bib}

\begin{appendices}
\spacingset{1.5}

\section{Proofs of Main Results}
\label{sec:proofs}

\begin{proof}[Proof of Proposition~\ref{prop:tilt-LATE}]
\noindent Let's consider the case where $Q(\,\cdot \mid X)$ stochastically dominates $\Pi(\,\cdot \mid X)$. The other case can be similarly tackled.

\vspace{2 mm}
Letting $Z_{Q} := Q^{-1}\left(\Pi(Z \mid X) \mid X\right)$, we have
$$
Z_{Q} \geq Z \iff \Pi(Z \mid X) \geq Q(Z \mid X),
$$
 and the right-hand side is satisfied by the assumption of stochastic dominance. Thus by monotonicity, $A^{Z_{Q}} \geq A \equiv A^Z$. Further by construction, $Z_{Q} \mid X \sim Q(\cdot \mid X)$.
\end{proof}

\begin{proof}[Proof of Proposition~\ref{prop:dominance}]
      Assume $\delta > 0$. Observe that $Q_{\delta}(\cdot\mid X)$ stochastically dominates $\Pi(\cdot \mid X) := \Pb(Z \leq z \mid X)$. Specifically,
\begin{equation*}
    \begin{aligned}
        \Pi(z \mid X) - Q_{\delta}(z \mid X) & = \mathbb{P}(Z \leq z \mid X) - \frac{\int_{-\infty}^z e^{\delta u}\pi(u \mid X)du}{\E\left(e^{\delta Z}\mid X\right)} \\
        & = \E\left[I(Z \leq z)\left\{1-\frac{e^{\delta Z}}{\E\left(e^{\delta Z}\mid X\right)}\right\}\mid X\right] \\
        & \geq 0
    \end{aligned}
\end{equation*}
The last line follows since, for $\delta > 0$ 
$$
\E\left(I(Z \leq z)e^{\delta Z} \mid X\right) \leq \Pb(Z \leq z \mid X)\E\left(e^{\delta Z}\mid X\right).
$$
This is due to the following lemma.
\begin{lemma}[Theorem A.19 of \citet{devroye2013}]
    Let $f$ and $g$ be real-valued measurable non-decreasing functions and $X$ be a random variable. Then, $\E(f(X)g(X)) \geq \E(f(X))\E(g(X))$.
    \label{lem:fg}
\end{lemma}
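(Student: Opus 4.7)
The plan is to prove this classical correlation inequality (sometimes called Chebyshev's sum inequality or a one-dimensional FKG inequality) via a symmetrization trick. The key observation is that if $f$ and $g$ are both non-decreasing, then for any pair of real numbers $x, x'$, the quantities $f(x) - f(x')$ and $g(x) - g(x')$ have the same sign (both nonnegative if $x \geq x'$, both nonpositive if $x \leq x'$), so their product is always nonnegative.

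First, I would introduce $X'$, an independent copy of $X$, defined on a suitable product space. By the observation above, the random variable
\[
\bigl(f(X) - f(X')\bigr)\bigl(g(X) - g(X')\bigr)
\]
is almost surely nonnegative, hence has nonnegative expectation (assuming the relevant integrals exist; one typically assumes $f(X)$ and $g(X)$ are in $L^2$, or truncates and passes to the limit if only $f(X)g(X)$ is integrable).

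Next, I would expand the product and use linearity of expectation together with independence of $X$ and $X'$:
\[
\E\bigl[f(X)g(X)\bigr] - \E\bigl[f(X)g(X')\bigr] - \E\bigl[f(X')g(X)\bigr] + \E\bigl[f(X')g(X')\bigr] \geq 0.
\]
Since $X$ and $X'$ have the same distribution, the two outer terms both equal $\E[f(X)g(X)]$, and by independence the middle two terms both equal $\E[f(X)]\E[g(X)]$. Rearranging gives
\[
2\E[f(X)g(X)] \geq 2\E[f(X)]\E[g(X)],
\]
which is the claim. The only mild obstacle is handling integrability: one should either assume $f, g$ are bounded (or in $L^2$), or first prove the inequality for truncated versions $f_n := (f \wedge n) \vee (-n)$, $g_n$ analogously, and then pass to the limit via monotone or dominated convergence once integrability of $f(X)g(X)$, $f(X)$, $g(X)$ is granted (as it is implicitly in the application, where $f, g$ are the bounded indicator and exponential factors conditional on $X$).
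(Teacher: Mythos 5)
Your proof is correct. Note, however, that the paper does not actually prove this lemma at all: it is stated as a citation (Theorem A.19 of the Devroye et al.\ reference) and used as a black box inside the proof of Proposition~2. So your symmetrization argument --- introducing an independent copy $X'$, observing that $\bigl(f(X)-f(X')\bigr)\bigl(g(X)-g(X')\bigr)\geq 0$ pointwise because $f$ and $g$ are comonotone, expanding, and using independence and equidistribution to collapse the four terms --- supplies a complete, elementary, self-contained proof of a result the paper merely imports. This is the standard proof of Chebyshev's correlation inequality and is exactly what one would find in the cited source, so there is no substantive divergence in mathematical content, only in the decision to prove rather than cite. Your handling of integrability is also appropriately careful: in the paper's application the functions are $\pm I(Z\leq z)$ and $e^{\delta Z}$ conditional on $X$, and finiteness of $\E(e^{\delta Z}\mid X)$ is already needed for the tilted distribution $Q_\delta$ to be well defined, so the truncation-and-limit step you sketch goes through. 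One small point worth keeping in mind when connecting your lemma to its use in the paper: the indicator $I(Z\leq z)$ is non-\emph{increasing} in $Z$, so the lemma is applied to $f(Z)=-I(Z\leq z)$ and $g(Z)=e^{\delta Z}$, which reverses the inequality into the form $\E\bigl(I(Z\leq z)e^{\delta Z}\mid X\bigr)\leq \Pb(Z\leq z\mid X)\,\E\bigl(e^{\delta Z}\mid X\bigr)$ that the dominance proof actually needs; your statement and proof of the lemma itself are unaffected by this.
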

\end{proof}

\begin{proof}[Identification of $\psi(\delta)$]
    Note that, for $\delta > 0$ 
    \begin{equation}
        \mathbb{E}(Y^1 - Y^0 \mid A^{Z_{\delta}}>A) = \mathbb{E}\{Y^1-Y^0 I(A^{Z_{\delta}}>A)\}/\mathbb{P}(A^{Z_{\delta}}>A).
        \label{eq: psi num den}
    \end{equation} 
    Now,
\begin{equation}
    \begin{aligned}
        \mathbb{P}(A^{Z_{\delta}}>A) & = \Pb(A^{Z_{\delta}} = 1, A = 0) \\
        & = \mathbb{E}\{I(A^{Z_{\delta}} = 1, A = 0)\} \\
        & = \E[\E\{I(A^{Z_{\delta}} = 1, A = 0)\mid X\}] \\
        & = \E\{\E(A^{Z_{\delta}} - A \mid X)\}  \\
        & = \E\{\E(A^{Z_{\delta}}\mid X)\} - \E(A) \\
        & = \E(\gamma_{\delta}^A(X)) - \E(A)
    \end{aligned}
    \label{eq: psi den}
\end{equation}
where the fourth equality follows due to monotonicity (Assumption~\ref{ass:mono}) combined with Propositions \ref{prop:tilt-LATE} and \ref{prop:dominance} for $\delta > 0$. Recall that for any random variable $T$ and $\alpha_{\delta}(X) := \E(e^{\delta Z})$, $\gamma_{\delta}^T(X) := \E(Te^{\delta Z})/\alpha_{\delta}(X)$.

Similarly, using Exclusion restriction,
\begin{equation}
    \begin{aligned}
        \E\{(Y^1 - Y^0)I(A^{Z_{\delta}} > A)\} & = \E[\E\{(Y^1 - Y^0)I(A^{Z_{\delta}} >A) \mid X\}] \\
        & = \E(\gamma^Y_{\delta}(X)) - \E(Y)
    \end{aligned}
    \label{eq: psi num}
\end{equation}
Plugging in \eqref{eq: psi den} and \eqref{eq: psi num} in \eqref{eq: psi num den} we have \eqref{eq: tilted LATE}. The case for $\delta < 0$ can be similarly tackled.
\end{proof}

\begin{proof}[Proof of Theorem \ref{thm : eff IF}]
Recall that our parameter of interest is
\begin{equation}
    \psi(\delta) := \frac{\E\left[\gamma_{\delta}^Y(X)\right] - \E(Y)}{\E\left[\gamma_{\delta}^A(X)\right] - \E(A)}
\end{equation}
To find efficient influence function of this quantity, first we find the influence function of each piece:
\begin{enumerate}
    \item[(a)] $\E\left[\gamma_{\delta}^Y(X)\right] = \iint \mathbb{E}(Y \mid Z=u, X=x)  dQ_\delta(u \mid X=x) \mathbb{P}(d x)$, by the definition of $Q_{\delta}(. \mid X)$.
    \item[(b)] $\E(Y)$.
    \item[(c)] $\E\left[\gamma_{\delta}^A(X)\right] = \iint \mathbb{E}(A \mid Z=u, X=x)  dQ_\delta(u \mid X=x) \mathbb{P}(d x)$, by the definition of $Q_{\delta}(. \mid X)$. 
    \item[(d)] $\E(A)$. 
\end{enumerate}
Using the derivative rules while dealing with influence functions we have
\begin{equation}
    IF\left(\frac{f}{g}\right) = \frac{IF(f)g - IF(g)f}{g^2}
    \label{eq: derivative rule}
\end{equation}
Below we give the derivation of the influence function for quantity (a).
\begin{equation*}
        \begin{aligned}
            IF\left(\E(\gamma_{\delta}^Y(X))\right) & = IF\left(\iint \mathbb{E}(Y \mid Z=u, X=x)  dQ_\delta(u \mid X=x) \mathbb{P}(d x)\right) \\
            & = \sum_u \sum_x \frac{I(X = x)I(Z = u)}{\Pb(Z = u \mid X)\Pb(X =x)}(Y - \mathbb{E}(Y \mid Z=u, X=x))dQ_{\delta}(u \mid X = x)\Pb(X = x) \\
            &\hspace{20 pt} + \sum_u\sum_x \E(Y \mid Z = u, X = x)\text{IF}(dQ_{\delta}(u \mid X = x))\Pb(X = x) \\
            &\hspace{20 pt} + \sum_u\sum_x \E(Y \mid Z = u, X = x)dQ_{\delta}(u \mid X = x)\left\{I(X = x) - \Pb(X = x)\right\} \\
            & = \sum_u \sum_x \frac{I(X = x)I(Z = u)}{\Pb(Z = u \mid X)}(Y - \mathbb{E}(Y \mid Z=u, X=x))dQ_{\delta}(u \mid X = x) \\
            &\hspace{20 pt} + \sum_u\sum_x \E(Y \mid Z = u, X = x)\frac{e^{\delta u}I(X = x)I(Z = u)}{\int e^{\delta u}\mathbb{P}(R = u \mid X = x)du} \\
            &\hspace{20 pt} - \sum_u\sum_x \E(Y \mid Z = u, X = x) \frac{\sum_r e^{\delta r}I(Z = u)I(X = x)dQ_{\delta}(u \mid X =x)}{\int e^{\delta u}\mathbb{P}(Z = u \mid X = x)du} \\
            &\hspace{20 pt} + \sum_u\sum_x \E(Y \mid Z = u, X = x)dQ_{\delta}(u \mid X = x)I(X = x) \\
            &\hspace{20 pt} - \sum_u\sum_x \E(Y \mid Z = u, X = x)dQ_{\delta}(u \mid X = x)\Pb(X = x) \\
            &=  \left\{Y -  \mathbb{E}(Y \mid Z, X)\right\}\frac{dQ_{\delta}(Z \mid X)}{\pi(Z \mid X)} + \mathbb{E}(Y \mid Z, X)\frac{e^{\delta Z}}{\int e^{\delta u}\pi(u \mid X)du} \\
            &\hspace{20 pt} + \sum_u \sum_x \E(Y \mid Z = u, X = x)I(X = x)dQ_{\delta}(u \mid X = x)\left\{1-\frac{e^{\delta Z}}{\int e^{\delta u}\pi(u \mid X)du}\right\} - \E[Y^{Z_{\delta}}] \\
            & = \frac{Y e^{\delta Z}}{\alpha_{\delta}(X)} + \left(1-\frac{e^{\delta Z}}{\alpha_{\delta}(X)}\right)\gamma^Y_{\delta}(X) - \E[Y^{Z_{\delta}}]
        \end{aligned}
    \end{equation*}
    where $\alpha_{\delta}(X) := \E(e^{\delta Z}\mid X) = \int e^{\delta u}\pi(u \mid X)du$ and for any random variable $T$, $\gamma^T_{\delta}(X) := \frac{\E(Te^{\delta Z}\mid X)}{\alpha_{\delta}(X)}$.
    Similarly we have,
    \begin{equation}
        IF\left(\E(\gamma_{\delta}^A(X))\right) = \frac{A e^{\delta Z}}{\alpha_{\delta}(X)} + \left(1-\frac{e^{\delta Z}}{\alpha_{\delta}(X)}\right)\gamma^A_{\delta}(X) - \E[A^{Z_{\delta}}]
    \end{equation}
    For any random variable $T$, define $\xi(T; \delta) := \frac{Te^{\delta Z}}{\alpha_{\delta}(X)} + \gamma_{\delta}^T(X)\left(1 - \frac{e^{\delta Z}}{\alpha_{\delta}(X)}\right)$ and further take $\Xi(T; \delta, 0) := \xi(T; \delta) - \xi(T; 0)$.
    Finally it is straightforawrd to show that
    \begin{equation}
        IF(\E(Y)) = Y - \E(Y); \;\; IF(\E(A)) = A - \E(A)
    \end{equation}
    Plugging these expressions into \eqref{eq: derivative rule},
    \begin{equation}
        \begin{aligned}
            \varphi_{\delta}(z; \eta, \psi) & = \frac{g\{\Xi(Y; \delta, 0)-f\} -f\{\Xi(A;\delta, 0)-g\}}{g^2} \\
            & = \frac{\Xi(Y; \delta, 0) - \psi\Xi(A; \delta, 0)}{g} \quad \text{since } \psi = \frac{f}{g} \\
            & = \frac{\Xi(Y; \delta, 0) - \psi\Xi(A; \delta, 0)}{\E\left(A^{Z_{\delta}} - A\right)}
        \end{aligned}
    \end{equation}
\end{proof}

\begin{proof}[Proof of Theorem \ref{thm: asymptotic normality}]
Let us re-define $f(Y;\eta) := \Xi(Y;\delta,0)$ and $g(A;\eta) := \Xi(A;\delta,0)$ where $\eta$ denotes the vector of nuisance functions $\left(\alpha_{\delta}(X),\gamma_{\delta}^Y(X),\gamma_{\delta}^A(X)\right)$.
    We then have $\widehat{\psi}_{IF}(\delta) = \frac{\Pn f(Y; \widehat{\eta})}{\Pn g(A; \widehat{\eta})}$ so that
    \begin{equation}
    \widehat{\psi}_{IF}(\delta) - \psi(\delta) = \frac{1}{\Pn g(A;\widehat{\eta
    })}\left[\Pn f(Y;\widehat{\eta})-\Pb f(Y;\eta) - \psi\{\Pn g(A;\widehat{\eta})-\Pb g(A;\eta)\}\right]
    \label{eq: decomposition}
    \end{equation}
    We can treat the first ratio as a constant under the causal assumptions and break down the other components into smaller pieces, for example:
    \begin{equation}
        \Pn f(Y;\widehat{\eta})-\Pb f(Y;\eta) = (\Pn - \Pb)\{f(Y;\widehat{\eta})-f(Y; \eta)\} + (\Pn - \Pb)f(Y;\eta) - P\{f(Y;\widehat{\eta})-f(Y; \eta)\}
    \end{equation}
    The first term on the RHS is $o_{\Pb}(1/\sqrt{n})$ as long as we estimate the nuisance parameters on a separate sample and assume $\|1-\frac{\alpha_{\delta}(X)}{\widehat{\alpha}_{\delta}(X)}\|+\|\widehat{\gamma}_{\delta}^Y(X) - \gamma_{\delta}^Y(X)\|+\|\widehat{\gamma}_{\delta}^A(X) - \gamma_{\delta}^A(X)\| = o_{\Pb}(1)$ \citep{kennedy2020b}. The second term is asymptotically Gaussian by CLT. The third term can be decomposed as
    \begin{equation}
        \begin{aligned}
            \Pb f(Y; \widehat{\eta}) - \Pb f(Y; \eta) & = \E\left[\frac{e^{\delta Z}Y}{\widehat{\alpha}_{\delta}(X)} + \widehat{\gamma}_{\delta}^Y(X)\left(1-\frac{e^{\delta}Z}{\widehat{\alpha}_{\delta}(X)}\right) - \frac{e^{\delta Z}Y}{\alpha_{\delta}(X)}\right] \\
            & = \E\left[\left(\widehat{\gamma}_{\delta}^Y(X) - \gamma_{\delta}^Y(X)\right)\left(1-\frac{\alpha_{\delta}(X)}{\widehat{\alpha}_{\delta}(X)}\right)\right]\\
            & \lesssim \left\|\widehat{\gamma}_{\delta}^Y(X) - \gamma_{\delta}^Y(X)\right\|\left\|1-\frac{\alpha_{\delta}(X)}{\widehat{\alpha}_{\delta}(X)}\right\|
        \end{aligned}
        \label{eq: f guarantee}
    \end{equation}
    Similarly we have 
    \begin{equation}
        \Pb g(A; \widehat{\eta}) - \Pb g(A; \eta) = \lesssim \left\|\widehat{\gamma}_{\delta}^A(X) - \gamma_{\delta}^A(X)\right\|\left\|1-\frac{\alpha_{\delta}(X)}{\widehat{\alpha}_{\delta}(X)}\right\|
        \label{eq: g guarantee}
    \end{equation}
    Finally plugging in \eqref{eq: f guarantee} and \eqref{eq: g guarantee} in \eqref{eq: decomposition}, we have \eqref{eq: convergence}.
    To achieve consistent estimation of $\psi(\delta)$, we therefore need
    $$
    \left\|1-\frac{\alpha_{\delta}(X)}{\widehat{\alpha}_{\delta}(X)}\right\|\left\{\left\|\widehat{\gamma}_{\delta}^Y(X)-\gamma_{\delta}^Y(X)\right\|+ \left\|\widehat{\gamma}_{\delta}^A(X)-\gamma_{\delta}^A(X)\right\|\right\} = o_{\Pb}(1/\sqrt{n})
    $$
    This is satisfied if all terms are estimated at faster than $n^{-1/4}$ or if $\alpha_{\delta}(.)$ (or both $\gamma_{\delta}^Y(.)$ and $\gamma_{\delta}^A(.)$) are estimated at $n^{-1/2}$.
\end{proof}

\begin{proof}[Proof of Theorem \ref{thm: uniform normality}]
    It follows the proof of Theorem 3 in \cite{mauro2020instrumental}.
\end{proof}

\section{Profiling}
\label{sec: app profiling}

\begin{proof}[Profiling]
Let $V^{\prime}$ be $X\backslash V$ such that $X = (V, V^{\prime})$. Furthermore, we assume that $V^{\prime}$ follows the marginal density $d\widetilde{Q}$.

\textbf{Complier :}

For $\delta > 0$
\begin{equation}
    \begin{aligned}
        \Pb(A^{Z_{\delta}} = 1, A = 0 \mid V = v) & = \int \E\left(A^{Z_{\delta}} - A \mid X = x\right)d\widetilde{Q}(v^{\prime}) \\
        & = \int \left(\gamma_{\delta}^A(x) - \E\left(A \mid X = x\right)\right)d\widetilde{Q}(v^{\prime})
    \end{aligned}
\end{equation}
The second equality follows from clubbing assumption 4 along with propositions \ref{prop:dominance} and \ref{prop:tilt-LATE}.
Therefore, by Bayes' theorem
\begin{equation}
\begin{aligned}
    \Pb\left(V = v_0 \mid A^{Z_{\delta}} = 1, A = 0\right) & = \int 1(v = v_0)\Pb\left(V = v \mid A^{Z_{\delta}} = 1, A = 0\right)dQ(v) \\
    & = \frac{\int 1(V = v_0)\Pb\left(A^{Z_{\delta}} = 1, A = 0 \mid V = v\right)dQ(v)}{\Pb\left(A^{Z_{\delta}} = 1, A = 0\right)} \\
    & = \frac{\E\left[1(V = v_0)\left\{\gamma_{\delta}^A(X) - \E(A \mid X)\right\}\right]}{\E(\gamma_{\delta}^A(X) - \E(A \mid X))}
\end{aligned}
\end{equation}
Similarly, the case for $\delta < 0$ can be shown to have the same expression.

\textbf{Always Taker :}
 For $\delta > 0$,
\begin{equation}
    \begin{aligned}
        \Pb(A^{Z_{\delta}} = A = 1 \mid V = v) & = \int \E\left(A^{Z_{\delta}}A \mid X = x\right)d\widetilde{Q}(v^{\prime}) \\
        & = \int \E\left(A \mid X = x\right)d\widetilde{Q}(v^{\prime}) \\
        & = \E(A \mid V = v)
    \end{aligned}
\end{equation}
The second equality follows from clubbing assumption 4 along with propositions \ref{prop:dominance} and \ref{prop:tilt-LATE}.
Therefore 
\begin{equation}
\begin{aligned}
    \Pb\left(V = v_0 \mid A^{Z_{\delta}} = A = 1\right) & = \int 1(v = v_0)\Pb\left(V = v \mid A^{Z_{\delta}} = A = 1\right)dQ(v) \\
    & = \frac{\int 1(V = v_0)\Pb\left(A^{Z_{\delta}} = A = 1 \mid V = v\right)dQ(v)}{\Pb\left(A^{Z_{\delta}} = A = 1\right)} \\
    & = \frac{\E\left(1(V = v_0)\E(A \mid X)\right)}{\E(A)}
\end{aligned}
\end{equation}
Now, for $\delta < 0$,
\begin{equation}
    \begin{aligned}
        \Pb(A^{Z_{\delta}} = A = 1 \mid V = v) & = \int \E\left(A^{Z_{\delta}}A \mid X = x\right)d\widetilde{Q}(v^{\prime}) \\
        & = \int \E\left(A^{Z_{\delta}} \mid X = x\right)d\widetilde{Q}(v^{\prime}) \\
        & = \int \gamma_{\delta}^A(x)d\widetilde{Q}(v^{\prime})
    \end{aligned}
\end{equation}
Therefore, 
\begin{equation}
\Pb\left(V = v_0 \mid A^{Z_{\delta}} = A = 1\right)  = \frac{\E\left(1(V = v_0)\gamma_{\delta}^A(X)\right)}{\E\left(\gamma_{\delta}^A(X)\right)}
\end{equation}

\textbf{Never Taker :}
\begin{equation}
    \begin{aligned}
        \Pb(A^{Z_{\delta}} = A = 0 \mid V = v) & = \int \E\left((1 - A^{Z_{\delta}}A) \mid X = x\right)d\widetilde{Q}(v^{\prime}) \\
        & = \int \E\left(1 - A^{Z_{\delta}} \mid X = x\right)d\widetilde{Q}(v^{\prime}) \\
        & = \int \left(1-\gamma_{\delta}^A(x)\right)d\widetilde{Q}(v^{\prime}) \quad Z \perp A(z) \forall z
    \end{aligned}
\end{equation}
where the second equality follows by combining assumption 4 with propositions \ref{prop:dominance} and \ref{prop:tilt-LATE}.
Therefore, by Bayes' theorem,
\begin{equation}
\Pb\left(V = v_0 \mid A^{Z_{\delta}} = A = 0\right) = \frac{\E\left(1(V = v_0)\left(1- \gamma_{\delta}^A(X)\right)\right)}{\E\left(1- \gamma_{\delta}^A(X)\right)}    
\end{equation}
Now, for $\delta < 0$,
\begin{equation}
    \begin{aligned}
        \Pb(A^{Z_{\delta}} = A = 0 \mid V = v) & = \int \E\left(1 - A^{Z_{\delta}}A \mid X = x\right)d\widetilde{Q}(v^{\prime}) \\
        & = \int \E\left(1 - A \mid X = x\right)d\widetilde{Q}(v^{\prime}) \\
        & = \E(1 - A \mid V = v)
    \end{aligned}
\end{equation}
Therefore, 
\begin{equation}
\Pb\left(V = v_0 \mid A^{Z_{\delta}} = A = 1\right)  = \frac{\E\left(1(V = v_0)\E(1 - A \mid X)\right)}{\E(1 - A)}
\end{equation}
\end{proof}

Under the assumptions in Section~\ref{sec:assump}, the density for baseline covariate at $V = v_0$ among the compliers ( for $\delta < 0$) is identified as follows:
\begin{equation}
    \Pb(V = v_0 \mid A^{Z_{\delta}} < A) = \frac{\E[D_{\psi}I(V = v_0)]}{\E[D_{\psi}]},
    \label{eq: comp profiling dl0}
\end{equation}
where $D_{\psi} = \gamma_{\delta}^A(X) - \E(A \mid X)$. 

An expression similar to \eqref{eq: comp profiling dl0} can be derived for never-takers ($A^{Q_{\delta}} = 0 = A$):
\begin{equation}
\Pb\left(V = v_0 \mid A^{Z_{\delta}} = A = 0\right) = \frac{\E\left(1(V = v_0)\E(1-A \mid X)\right)}{\E(1-A)}.
\label{eq: nt profiling dl0}
\end{equation}
Next, the expression for always-takers ($A^{Z_{\delta}} = 1 = A$) is
\begin{equation}
\Pb\left(V = v_0 \mid A^{Z_{\delta}} = A = 1\right) = \frac{\E\left(1(V = v_0)\gamma_{\delta}^A(X)\right)}{\E\left(\gamma_{\delta}^A(X)\right)}.
\label{eq: at profiling dl0}
\end{equation}
Notably, contrary to the case of $\delta >0$, here the expression for never-takers does not depend on $\delta$. The rest proceeds similarly to the case of $\delta > 0$ in Section \ref{sec: profiling}.

\section{Restrictions on the Defier Population}
\label{sec: app defier}

\citet{richardson2010analysis} showed that the observed distribution of compliers constrains the proportion of defiers
when the instrument is a binary variable. Here, we demonstrate that a similar form of constraint exists under our estimand
with a continuous instrument. First, let $t_X$ to denote generic compliance type in the set $D_X = \{AT, NT, CO, DE\}$, where $AT = \text{Always Taker}$, $NT = \text{Never Taker}$, $CO = \text{Complier}$, $DE = \text{Defier}$. Let $\pi_{t_X} \equiv p(t_X)$ denote the marginal probability of a given compliance type $t_X \in D_X$, and let $\pi_X \equiv {\pi_{t_X} \mid t_X \in D_X}$ denote a marginal distribution on $D_X$.

\noindent {\bf Case 1:}
For every $z > z^{\prime}$,
\begin{equation*}
    \begin{aligned}
        \PP(A = 1 \mid Z = z^{\prime}) & = \PP(A^z = 0, A^{z^{\prime}} = 1) + \PP(A^z = 1, A^{z^{\prime}} = 1) \\
        = \PP(\text{DE}) + \PP(\text{AT})
    \end{aligned}
\end{equation*}
Similarly,
\begin{equation*}
\PP(A = 1 \mid Z = z) = \PP(\text{CO}) + \PP(\text{AT})
\end{equation*}
Define $c_1 = \PP(A = 1 \mid Z = z^{\prime}), c_2 = \PP(A = 1 \mid Z = z)$.
$$
P_{c_1, c_2} = \left\{\pi_{AT} = t, \pi_{DE} = c_1 - t, \pi_{co} = c_2 - t, \pi_{NT} = 1 - c_1 - c_2 + t, \; \; t \in \left[\max\{0, c_1+c_2 -1\}, \min\{c_1, c_2\}\right]\right\}
$$
where $P_{c_1,c_2}$ is the set of joint distributions $\pi_X$ on $D_X$ that are compatible with $\Pb(x | z)$.

\noindent {\bf Case 2:}
For $\delta > 0$, let $Z_{\delta} \sim Q_{\delta}, Z \sim \Pi$. It can then be proved,
\begin{equation*}
    \begin{aligned}
        \PP(A = 1 \mid Z) & = \PP(A^{Z_{\delta}} = 0, A^Z = 1) + \PP(A^{Z_{\delta}} = 1, A^Z = 1) \\
        & = \PP(\text{DE}_{\delta}) + \PP(\text{AT}_{\delta})
    \end{aligned}    
\end{equation*}
and 
\begin{equation*}
\PP(A = 1 \mid Z_{\delta}) = \PP(\text{CO}_{\delta}) + \PP(\text{AT}_{\delta})
\end{equation*}
Define $U_1 := A^{Z_{\delta}}, U_2 := A^Z$ and $c_1 = \PP(A = 1 \mid Z), c_2 = \PP(A = 1 \mid Z_{\delta})$
$$
P_{c_1, c_2} = \left\{\pi_{AT} = t, \pi_{DE} = c_1 - t, \pi_{co} = c_2 - t, \pi_{NT} = 1 - c_1 - c_2 + t, \; \; t \in \left[\max\{0, c_1+c_2 -1\}, \min\{c_1, c_2\}\right]\right\}
$$
where $P_{c_1, c_2}$ is defined as in the previous case.

\section{Sensitivity Analysis}
\label{sec: app sensitivity}
For $\delta < 0$ we have
\begin{equation}
    \E\left(Y^1 - Y^0 \mid A^{Z_{\delta}} < A\right) = \psi(\delta) + \frac{\gamma_1(\delta)\gamma_2(\delta)}{\E\left( A - A^{Z_{\delta}}\right)}
\end{equation}
where for a pre-specified $\delta, \gamma_1(\delta)$ signifies the proportion of defiers, while $\gamma_2(\delta)$ represents the discrepancy in average treatment effects between defiers and compliers. Specifically, $\gamma_1(\delta)$ and $\gamma_2(\delta)$ are defined as
\begin{equation*}
    \begin{aligned}
        \gamma_1(\delta) & := \Pb\left(A^{Z_{\delta}} > A\right) \\
        \gamma_2(\delta) & := \E\left(Y^1 - Y^0 \mid A^{Z_{\delta}} > A\right) - \E\left(Y^1 - Y^0 \mid A^{Z_{\delta}} < A\right)
    \end{aligned}
\end{equation*}
The rest proceeds similarly to the case of $\delta > 0$ in Section \ref{sec: sensitivity}.

\section{Identification of $\psi(\delta_1, \delta_2) = \E\left(Y^1 - Y^0 \mid A^{Z_{\delta_1}} > A^{Z_{\delta_2}}\right)$}
\label{app:d1d2}

One can observe that Proposition~\ref{prop:tilt-LATE} can be generalized, for arbitrary distributions $Q_1, Q_2$, as follows.
\begin{proposition}
    Under Assumption \ref{ass:mono}, if $Q_1$ stochastically dominates $Q_2$, then $\Pb[A^{Z_{Q_{1}}} \geq A^{Z_{Q_{2}}}] = 1$. Similarly, if $Q_{1}$ is stochastically dominated by $Q_{2}$, then $\Pb[A^{Z_{Q_{1}}} \leq A^{Z_{Q_{2}}}] = 1$. 
\label{prop:tilt-LATE general}
\end{proposition}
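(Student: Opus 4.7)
The plan is to reduce the claim to the already-established Proposition~\ref{prop:tilt-LATE} by observing that $Z_{Q_1}$ and $Z_{Q_2}$ are constructed from the \emph{same} underlying randomization, namely the probability-integral transform $U := \Pi(Z\mid X)$. Specifically, by definition $Z_{Q_j} = Q_j^{-1}(U \mid X)$ for $j=1,2$, so the entire problem collapses to a pointwise comparison of the two conditional quantile functions evaluated at the same $U$.

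Concretely, the first step is to show that stochastic dominance of conditional CDFs transfers to a reverse ordering of their conditional quantile functions. That is, if $Q_1(z \mid X) \leq Q_2(z \mid X)$ for every $z$, then $Q_1^{-1}(u \mid X) \geq Q_2^{-1}(u \mid X)$ for every $u \in (0,1)$. Using the generalized inverse $Q_j^{-1}(u \mid X) := \inf\{z : Q_j(z \mid X) \geq u\}$, this follows because any $z$ with $Q_2(z \mid X) \geq u$ also satisfies $Q_1(z\mid X) \leq Q_2(z\mid X)$, so the infimum defining $Q_1^{-1}$ is taken over a potentially smaller set (shifted to the right). Plugging in $u = \Pi(Z\mid X)$ yields $Z_{Q_1} \geq Z_{Q_2}$ almost surely (given $X$, and hence unconditionally).

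The second step is immediate from Assumption~\ref{ass:mono}: since $Z_{Q_1} \geq Z_{Q_2}$ almost surely, monotonicity of $z \mapsto A^z$ gives $A^{Z_{Q_1}} \geq A^{Z_{Q_2}}$ almost surely. Note that although $Z_{Q_1}$ and $Z_{Q_2}$ are random, monotonicity is a pointwise statement about counterfactual exposures, so the conclusion applies at each realization of the underlying $(Z,X)$. The symmetric case where $Q_1$ is stochastically dominated by $Q_2$ is handled by exchanging the roles of $Q_1$ and $Q_2$ in the argument above.

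The main obstacle is purely technical: carefully handling the generalized inverse when $Q_j(\,\cdot \mid X)$ is not strictly increasing or when $\Pi(\,\cdot \mid X)$ has atoms, in which case $U$ is not exactly uniform. This is resolved by working throughout with the left-continuous inverse and using the standard Galois-type equivalence $Q_j^{-1}(u \mid X) \leq z \iff u \leq Q_j(z \mid X)$, which is precisely what converts stochastic dominance of the CDFs into the reverse inequality for their quantile functions. No positivity or continuity of $\pi(\,\cdot \mid X)$ is needed, which is consistent with the broader message of the paper.
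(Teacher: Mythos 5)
Your proposal is correct and takes essentially the same route as the paper's proof: both reduce the claim to the pointwise ordering $Q_1^{-1}(\Pi(Z\mid X)\mid X) \geq Q_2^{-1}(\Pi(Z\mid X)\mid X)$ obtained from stochastic dominance, and then apply monotonicity of $z \mapsto A^z$ at the (common) realized value of $\Pi(Z\mid X)$. The only difference is that the paper cites Theorem 1$'$ of Levy (1978) for the reversal of quantile functions under stochastic dominance, whereas you derive it directly from the generalized inverse---where the cleanest phrasing of your set argument is that $Q_1(\cdot\mid X)\leq Q_2(\cdot\mid X)$ gives $\{z: Q_1(z\mid X)\geq u\}\subseteq\{z: Q_2(z\mid X)\geq u\}$, so the infimum over the former set is at least that over the latter.
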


\begin{proof}
    Without loss of generality assume $Q_1$ stochastically dominates $Q_2$. Recall that by definition, $Z_{Q_j} := Q_j^{-1}(\Pi(Z \mid X) \mid X)$, for $j \in \{1, 2\}$. Now,
    $$
    Z_{Q_1} \geq Z_{Q_2} \iff Q_1^{-1}(\Pi(Z \mid X) \mid X) \geq Q_2^{-1}(\Pi(Z \mid X) \mid X),
    $$
    but the latter holds by stochastic dominance---see Theorem 1' of \citet{levy1978}.
By Assumption \ref{ass:mono}, we conclude that $A^{Z_{Q_1}} \geq A^{Z_{Q_2}}$.
\end{proof}


\noindent Similarly, for arbitrary $\delta_1, \delta_2$, Proposition \ref{prop:dominance} can be generalized as follows.
\begin{proposition}
    For $\delta_1 > \delta_2$, $Q_{\delta_1}$ stochastically dominates $Q_{\delta_2}$; for $\delta_1 < \delta_2, Q_{\delta_1}$ is stochastically dominated by $Q_{\delta_2}$.
\label{prop:dominance general}
\end{proposition}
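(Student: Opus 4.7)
The plan is to reduce Proposition \ref{prop:dominance general} to the already-established Proposition \ref{prop:dominance} by means of a simple exponential change of measure. Assume without loss of generality that $\delta_1 > \delta_2$; the reverse case follows by swapping the roles of the two indices. The key observation is that $Q_{\delta_2}$ and $Q_{\delta_1}$ are both exponential tilts of $\pi$, and their relationship is itself that of a baseline density and a positive tilt thereof.

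Concretely, I would introduce the conditional density $\widetilde{\pi}(u \mid X) := e^{\delta_2 u}\pi(u \mid X)/\alpha_{\delta_2}(X)$, which is a bona fide density by the definition of $\alpha_{\delta_2}(X)$. A direct calculation shows that its CDF coincides with $Q_{\delta_2}(\cdot \mid X)$, and that applying the tilting construction \eqref{eq: tilted intervention} to $\widetilde{\pi}$ with tilt parameter $\delta' := \delta_1 - \delta_2 > 0$ recovers $Q_{\delta_1}$. Indeed,
\begin{equation*}
\frac{\int_{-\infty}^{z}e^{\delta' u}\widetilde{\pi}(u \mid X)\,du}{\int_{-\infty}^{\infty}e^{\delta' u}\widetilde{\pi}(u \mid X)\,du}
= \frac{\int_{-\infty}^{z}e^{\delta_1 u}\pi(u \mid X)\,du}{\int_{-\infty}^{\infty}e^{\delta_1 u}\pi(u \mid X)\,du} = Q_{\delta_1}(z \mid X),
\end{equation*}
after the $\alpha_{\delta_2}(X)$ factors cancel between numerator and denominator.

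Having recast $Q_{\delta_1}$ as the positive tilt of the baseline $\widetilde{\pi}$ and $Q_{\delta_2}$ as the baseline CDF itself, I would invoke Proposition \ref{prop:dominance} verbatim with $\pi$ replaced by $\widetilde{\pi}$ and $\delta$ replaced by $\delta' = \delta_1 - \delta_2 > 0$. The proof of that proposition goes through unchanged, since it uses only that the baseline is a valid conditional density (so that Lemma \ref{lem:fg} applies to $f(Z) = I(Z \leq z)$ and $g(Z) = -e^{\delta' Z}$ under the measure $\widetilde{\pi}$). This immediately yields $Q_{\delta_2}(z \mid X) \geq Q_{\delta_1}(z \mid X)$ for all $z$, i.e., $Q_{\delta_1}$ stochastically dominates $Q_{\delta_2}$. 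The case $\delta_1 < \delta_2$ follows by interchanging the indices, giving the second claim.

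There is no real obstacle here: the argument is essentially a bookkeeping reduction, and the only step requiring any care is verifying the algebraic identity that exhibits $Q_{\delta_1}$ as the $(\delta_1 - \delta_2)$-tilt of $\widetilde{\pi}$. As an alternative one could give a direct proof by writing $Q_{\delta_2}(z \mid X) - Q_{\delta_1}(z \mid X)$ as a double integral against $\pi(u\mid X)\pi(v \mid X)$ of the antisymmetric kernel $e^{\delta_2(u+v)}\{e^{(\delta_1-\delta_2)v} - e^{(\delta_1-\delta_2)u}\}$, splitting the domain into $\{u \leq z, v \leq z\}$ (where symmetrization gives zero) and $\{u \leq z < v\}$ (where the integrand is nonnegative since $\delta_1 > \delta_2$); but the change-of-measure reduction above is cleaner and leverages work already done.
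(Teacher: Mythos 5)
Your proof is correct, and it takes a genuinely different route from the paper's. The paper argues directly: for $\delta_1 > \delta_2 > 0$ it expands $Q_{\delta_1}(z \mid X) - Q_{\delta_2}(z \mid X)$ as a conditional expectation and bounds it by applying Lemma \ref{lem:fg} to the pair $-I(Z \leq z)e^{\delta_2 Z}$ and $e^{(\delta_1 - \delta_2)Z}$, deferring the remaining sign configurations to ``similar'' arguments. You instead observe that $Q_{\delta_1}$ is the $(\delta_1-\delta_2)$-tilt of the density $\widetilde{\pi}(u \mid X) \propto e^{\delta_2 u}\pi(u \mid X)$, whose CDF is exactly $Q_{\delta_2}$, and then invoke Proposition \ref{prop:dominance} under this tilted baseline. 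Both arguments ultimately rest on the same association inequality (Lemma \ref{lem:fg}), but your change of measure buys two things. First, it handles all sign configurations of $(\delta_1,\delta_2)$ in one stroke, since nothing in the reduction uses $\delta_2 > 0$. Second, it applies the lemma only to a genuinely monotone pair ($-I(Z \leq z)$ and $e^{(\delta_1-\delta_2)Z}$, or equivalently the two non-increasing functions you name) under the measure $\widetilde{\pi}$; by contrast, the paper's application to $-I(Z \leq z)e^{\delta_2 Z}$ is delicate because that function is not monotone when $\delta_2 \neq 0$, so the paper's step implicitly requires the very renormalization you make explicit. Your alternative double-integral symmetrization sketch is also valid and fully elementary. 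The only point worth spelling out in a final write-up is that Proposition \ref{prop:dominance} is being applied with the conditional law of $Z$ given $X$ replaced by $\widetilde{\pi}(\cdot \mid X)$, which is legitimate precisely because its proof uses nothing about $\pi$ beyond its being a valid conditional density.
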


\begin{proof}
    Assume $\delta_1 > \delta_2 > 0$. 
\begin{equation*}
    \begin{aligned}
        Q_{\delta_1}(z \mid X) - Q_{\delta_2}(z \mid X) & = \E\left[I(Z \leq z)\left\{\frac{e^{\delta_1 Z}}{\E\left(e^{\delta_1 Z}\mid X\right)}-\frac{e^{\delta_2 Z}}{\E\left(e^{\delta_2 Z}\mid X\right)}\right\}\mid X\right] \\
        & \leq 0
    \end{aligned}
\end{equation*}
This is due to
\begin{equation}
\begin{aligned}
    \E\left[I(Z \leq z)e^{\delta_1 Z} \mid X\right] & = -\E\left[-I(Z \leq z)e^{\delta_2 Z}e^{(\delta_1 - \delta_2)Z}\mid X\right] \\
    & \leq -\E\left[-I(Z \leq z)e^{\delta_2 Z}\mid X\right]\E\left[e^{(\delta_1 - \delta_2)Z} \mid X\right] \\
    & \leq \E\left[I(Z \leq z)e^{\delta_2 Z} \mid X\right]\E\left[e^{\delta_1 Z}\mid X\right]\frac{1}{\E\left[e^{\delta_2 Z}\mid X\right]} 
\end{aligned}
\label{eq: ineq d1d2}
\end{equation}
so that,
$$
\frac{\E\left[I(Z \leq z)e^{\delta_1 Z} \mid X\right]}{E\left[e^{\delta_1 Z} \mid X\right]} \leq \frac{\E\left[I(Z \leq z)e^{\delta_2 Z} \mid X\right]}{E\left[e^{\delta_2 Z} \mid X\right]}
$$
The last inequality in \eqref{eq: ineq d1d2} follows since by Lemma \ref{lem:fg},
\begin{equation*}
\begin{aligned}
    \E\left[e^{\delta_1 Z}\mid X\right] & = \E\left[e^{(\delta_1 - \delta_2)Z}e^{\delta_2 Z}\mid X\right] \\
    & \geq \E\left[e^{(\delta_1-\delta_2)Z}\mid X\right]\E\left[e^{\delta_2 Z}\mid X\right].
\end{aligned}    
\end{equation*}
Ultimately, we have for every $z$,
\begin{equation}
    Q_{\delta_1}(z \mid X) \leq Q_{\delta_2}(z \mid X) \leq 0
\end{equation}
Other cases such as $0 > \delta_1 > \delta_2 $ and $\delta_1 < \delta_2$ can be similarly handled.
\end{proof}



\noindent Now let w.l.o.g. $\delta_1 > \delta_2$. Then,
\begin{equation*}
\begin{aligned}
    \Pb(A^{Z_{\delta_1}} > A^{Z_{\delta_2}}) & = \Pb(A^{Z_{\delta_1}} = 1, A^{Z_{\delta_2}} = 0) \\
    & = \E\left(\E(A^{Z_{\delta_1}} - A^{Z_{\delta_2}} \mid X)\right) \\
    & = \E\left\{\gamma_{\delta_1}^A(X) - \gamma_{\delta_2}^A(X)\right\}.
\end{aligned}
\end{equation*} 
The second equality follows from Assumption \ref{ass:mono} clubbed together with propositions \ref{prop:dominance general} and \ref{prop:tilt-LATE general} for $\delta_1 > 0 > \delta_2$.
Similarly,
\begin{equation}
    \E\left\{(Y^1 - Y^0)I(A^{\delta_1} > A^{\delta_1})\right\} = \E\left\{\gamma_{\delta_1}^Y(X) - \gamma_{\delta_2}^Y(X)\right\}
\end{equation}
so that
\begin{equation}
    \psi(\delta_1, \delta_2) = \frac{\E\left\{\gamma_{\delta_1}^Y(X) - \gamma_{\delta_2}^Y(X)\right\}}{\E\left\{\gamma_{\delta_1}^A(X) - \gamma_{\delta_2}^A(X)\right\}}
\end{equation}


\end{appendices}

\end{document}